\documentclass{llncs}
\usepackage{txfonts}
\usepackage{euscript}
\usepackage{amssymb}
\usepackage[all]{xy}
\usepackage{url}

\newcommand{\<}{\sqsubseteq}
\newcommand{\com}{\mathop{\circ}}

\newcommand{\lra}{\longrightarrow}

\newcommand{\ps}{\mathcal{P}}
\newcommand{\Rel}{\textup{\textbf{Rel}}}
\newcommand{\rel}[1]{\mathrm{Rel}(#1)}
\newcommand{\Sets}{\textup{\textbf{Sets}}}
\newcommand{\B}{\mathop{_{Act^r}\mathcal{S}_{Act^l}}}   
\newcommand{\Ord}{\mathop{_{Act^r}\hspace{-1.2mm}\sqsubseteq_{Act^l}}} 
 
\newcommand{\CSO}{\sqsubseteq^{\mathit{Conf}}} 
\newcommand{\relord}[2]{\mathrm{Rel}_{#2}(#1)}
\newcommand{\CSEmpty}{\sqsubseteq^{\mathit{C\emptyset}}}
\newcommand{\CSNotEmpty}{\sqsubseteq^{\mathit{C\neg\emptyset}}}
\newcommand{\CSEmptya}{\sqsubseteq^{\mathit{a,\emptyset}}}
\newcommand{\CSNotEmptya}{\sqsubseteq^{\mathit{a,\neg\emptyset}}}

\pagestyle{plain}

\begin{document}

\title{Non-Strongly Stable Orders Also Define Interesting Simulation
Relations\thanks{Research supported by the Spanish
projects DESAFIOS TIN2006-15660-C02-01, WEST TIN2006-15578-C02-01, PROMESAS
S-0505/TIC/0407 and UCM-BSCH GR58/08/910606.}}

\author{Ignacio F\'abregas \and David de Frutos Escrig
\and Miguel Palomino}
\institute{Departamento de Sistemas Inform\'aticos y
Computaci\'on,  UCM\\
\email{fabregas@fdi.ucm.es \quad \{miguelpt, defrutos\}@sip.ucm.es}}

\maketitle

\begin{abstract}
We present a study of the notion of coalgebraic simulation introduced by Hughes
and Jacobs. Although in their original paper they allow any functorial order in
their definition of coalgebraic simulation, for the simulation relations to have
good properties they focus their attention on functors with orders which
are strongly stable. This guarantees a so-called ``composition-preserving''
property from which all the desired good properties follow.
We have noticed that the notion of strong stability not only ensures such good
properties but also ``distinguishes the direction'' of the simulation.
For example, the classic notion of simulation for labeled transition systems,
the relation ``$p$ is simulated by $q$'', can be defined as a coalgebraic
simulation relation by means of a strongly stable order, whereas the opposite
relation, ``$p$ simulates $q$'', cannot.
Our study was motivated by some interesting classes of simulations
that illustrate the application of these results: covariant-contravariant
simulations and conformance simulations.
\end{abstract}

\section{Introduction and presentation of our new results}

Simulations are a very natural way to compare systems defined by
transition systems or other related mechanisms based on the description of
systems by means of the  actions they can execute at each of their states
\cite{Park81}. They can be enriched in several ways to obtain, in particular,
the important ready simulation semantics \cite{Bloom95,LarsenSkou91}, as
well as other more elaborated ones such as nested simulations \cite{Groote92}.
Quite recently we have studied the general concept of constrained simulation
\cite{DeFrutosGregorio08}, proving that all the simulation relations constrained
by an adequate condition have similar properties. The semantics of these constrained
simulations is also the basis for our unified
presentation of the
semantics of processes \cite{DeFrutosEtAl08c}, where all the semantics in the
ltbt-spectrum \cite{VanGlabbeek01} (and other new semantics) are classified in a
systematic way.

Hughes and Jacobs \cite{HughesJacobs04} have also developed a
systematic study of simulation-like relations, this time in a purely coalgebraic
context, so that simulations are studied in connection with bisimulations
\cite{Park81}, the fundamental concept to define equivalence in the coalgebraic
world. Their coalgebraic simulations are defined in terms of an order $\<$
associated to the functor $F$ corresponding to the coalgebra $c:X\lra FX$ that
we want to observe.
In this way they obtain a very general notion of coalgebraic simulation, not
only because all functors $F$ are considered, including in particular the
important class of polynomial functors, but also because by changing the family
of orders $\<_X$ many different families of simulation relations can be
obtained.
The general properties of these simulations can be studied in the defined
coalgebraic framework, thus avoiding the need of similar
proofs for each of the particular classes of simulations.

Certainly, this generic presentation of the notion of coalgebraic simulation has
as advantage that it provides a wide and abstract framework where one can try to
isolate and take advantage of the main properties of all the simulation-like
relations. However, at the same time it can be argued that
the proposal fails to capture in a tight manner the spirit of simulation
relations because, in addition to the natural notions of simulations, the
framework also allows for other less interesting relations. This has as a result
that some natural properties of simulations cannot be proved in general,
simply due to the fact that they are not satisfied by all of the permitted
coalgebraic simulation relations.
For instance, the induced similarity relation between systems is not always an
order because transitivity is not always satisfied. In order to guarantee
transivity,
and other related properties of coalgebraic simulations, Jacobs and Hughes
introduce in \cite{Jacobs03} the composition-preserving property to the order
$\<$ that induces the simulation relation. In \cite{HughesJacobs04} they
continue with the study of the topic and present \emph{stability} of orders as
a natural categorical property to guarantee that an order is
composition-preserving. They also
comment that stability is not easy to check and introduce a stronger
condition (that we will call right-stability) so that, whenever applicable, the
checking of the main properties of coalgebraic simulations becomes much simpler
than in the general case.

Roughly speaking, given an order $\<_X$ on $FX$ for each set $X$, the induced
coalgebraic simulations are defined in the same way as bisimulations for $F$,
but allowing a double application of $\<$ on the two sides of the defined
relation. More precisely, instead of the functor $\rel{F}$ defining plain
bisimulations, $\relord{F}{\<}$ defined as $\<_Y\com\rel{F}\com \<_X$ is used.
There are several interesting facts hidden behind the apparent simplicity of
this definition. The first one is that, in general, it only defines an order
and not an equivalence relation, even if it is based on bisimulations (that
always define an equivalence relation, namely, bisimilarity).
The reason is that the order $\<$ appears ``in the same direction'' on both
sides of the definition, thus breaking its symmetry.
However, we can also define some equivalence relations weaker than bisimilarity
by using an equivalence relation $\equiv$ as the order $\<$. Another
interesting fact is that whenever we define a coalgebraic simulation by
using $\<$, the inverse order $\sqsupseteq$
defines the inverse relation of that defined by $\<$ once we also interchange
the roles of the related sets $X$ and $Y$ (so we could say that we are defining
in fact the same relation but looking at it from the other side). Stability is
also a symmetric condition, so that whenever an order $\<$ on a functor $F$ is
stable, the inverse order $\sqsupseteq$ is stable for $F$, too. This is quite
reasonable, since stability is imposed in order to guaratee transitivity of the
generated similarity relation and the inverse of a transitive relation
is also transitive, so that whenever $\<$ generates an ``admissible''
similarity relation (meaning that it is an order), the inverse order
$\sqsupseteq$ must be also admissible.

It is worth noting that the stronger condition guaranteeing stability
is asymmetric. In fact, Hughes and Jacobs prove in
\cite{HughesJacobs04} that ``right-stability'' implies that
\begin{equation}\label{cond1}
{\rel{F}(R)\com\<_X}\subseteq{\<_Y\com\rel{F}(R)},
\end{equation}
which in fact motivates our name for the condition.

A second surprise was to notice that, in most cases, right-stability
induces a ``natural direction'' on the orders defining the coalgebraic
simulation. For instance, for plain similarity over labeled transition systems,
the inclusion order $\subseteq$ induces the classic simulation
relation while the reversed inclusion $\supseteq$ induces the opposite
``simulated by'' relation: the first one is
right-stable while the second is not.

All these general results arose when
trying to integrate two new simulation-like notions as coalgebraic
simulations definable by a stable order, so that we could obtain for free all
the good properties that have been proved in \cite{HughesJacobs04} for
this class of relations.

The first new simulation notion is that of covariant-contravariant
simulations, where the alphabet of actions $Act$ is partitioned into
three disjoint sets $\textit{Act}^l$, $\textit{Act}^r$, and
$\textit{Act}^{\mathit{bi}}$.
The intention is for the simulation to treat the actions in $\textit{Act}^l$
like in the ordinary case, to interchange the role of the related processes for
those actions in $\textit{Act}^r$, and to impose
a symmetric condition like that defining bisimulation for the actions in
$\textit{Act}^{\mathit{bi}}$.

The second notion, conformance simulations, captures the conformance relations
\cite{Leduc92,Tretmans96} that several authors introduced in order to formalize
the notion of possible implementations.
Like covariant-contravariant simulations, they can be defined as coalgebraic
simulations for some stable order which is not right-stable neither
left-stable.
We show that the good properties of these two classes of orders are preserved
in those orders that can be seen as a kind of composition of right-stable and
left-stable orders. We use this fact to derive the stability of the
orders defining both covariant-contravariant and conformance simulations.

\section{Coalgebraic simulations and stability}

Given a category $\mathbb{C}$ and an endofunctor $F$ in $\mathbb{C}$, an
$F$-coalgebra, or just a coalgebra, consists of an object $X\in\mathbb{C}$
together with a morphism $c:X\lra FX$. We often call $X$ the state space and
$c$ the transition or coalgebra structure.

An arbitrary endofunctor $F:\Sets \lra \Sets$ can be lifted to a functor
in the category $\Rel$ over $\Sets\times \Sets$ of relations, $\rel{F} : \Rel
\lra \Rel$.
In set-theoretic terms, for a relation $R\subseteq X_1 \times X_2$,
\[
\rel{F}(R) = \{ \langle u,v \rangle \in FX_1 \times FX_2 \mid
                \exists w\in F(R).\, F(r_1)(w) = u, F(r_2)(w) = v \}.
\]

A \emph{bisimulation} for coalgebras $c : X\lra FX$ and $d:Y \lra FY$ is a
relation
$R\subseteq X\times Y$ which is ``closed under $c$ and $d$'':
\[
\textrm{if $(x,y) \in R$ then $(c(x), d(y)) \in \rel{F}(R)$},
\]

\noindent where the $r_i$ are the projections of $R$ into
$X$ and $Y$. Sometimes we shall use the term $F$-bisimulation to emphasize the
functor we are working with.

Bisimulations can also be characterized by means of spans, using the general
categorical definition by Aczel and Mendler~\cite{AczelMendler89}:
\[
\xymatrix@R=5.0ex{
 {X}\ar[d]_{c}   & {R} \ar[d]_{e}\ar[l]_{r_1}\ar[r]^{r_2}    &
Y\ar[d]_{d} \\
 {FX}           & {FR} \ar[l]_{Fr_1}\ar[r]^{Fr_2}           & {FY}
}
\]
$R$ is a bisimulation iff it is the carrier of some coalgebra $e$ making
the above diagram commute. Alternatively, bisimulations can also be defined as
the
$\rel{F}$-coalgebras in the category $\Rel$.

We will also need the general concept of simulation introduced by Hughes and
Jacobs~\cite{HughesJacobs04} using orders on functors.
Let $F : \Sets\lra\Sets$ be a functor.
\emph{An order on $F$} is defined by means of a functorial collection of
preorders
$\sqsubseteq_X \subseteq FX\times FX$ that must be preserved by renaming:
for every $f: X\lra Y$, if $u\sqsubseteq_X u'$ then
${Ff(u)}\sqsubseteq_Y{Ff(u')}$.

Given an order $\sqsubseteq$ on $F$, a \emph{$\sqsubseteq$-simulation} for
coalgebras $c: X\lra FX$ and $d: Y\lra FY$ is a relation $R\subseteq X\times Y$
such that
\[
\textrm{if $(x,y) \in R$ then $(c(x), d(y)) \in \relord{F}{\<}(R)$},
\]
where the lax relation lifting
$\relord{F}{\<}(R)$ is
$\sqsubseteq_Y\com\rel{F}(R)\com\sqsubseteq_X$,
which can be expanded to
\[
\relord{F}{\<}(R)=\{(u,v)\mid\exists w\in F(R).\; u\sqsubseteq_X
Fr_1(w)\wedge Fr_2(w)\sqsubseteq_Y v\}.
\]

\noindent Alternatively, $\<$-simulations are just the
$\relord{F}{\<}$-coalgebras in $\Rel$.

Sometimes, when $f:X\lra Y$ and $A\subseteq X$ we will simply write
$f(A)$ for the image $\coprod_f (A)$.

A functor with order $\<$ is \emph{stable} \cite{HughesJacobs04} if the relation
lifting $\relord{F}{\<}$ commutes with
substitution, that is, if for every $f:X\lra Z$ and $g:Y\lra W$,
$\relord{F}{\<}((f\times g)^{-1}(R))=(Ff\times
Fg)^{-1}(\relord{F}{\<}(R))$.\footnote{In fact, the
inclusion $\subseteq$ always holds.} They also define a stronger condition that
we are going to call right-stability.

\begin{definition}[\cite{HughesJacobs04}]
We will say that a functor $F$ with order $\<$ is \textbf{right-stable} if, for
every function $f:X\lra Y$, we have\footnote{Again, the other inclusion is
always true since $\<$ functorial means that $Ff(u)\<_Y Ff(v)$ if $u\<_X v$.}
\begin{equation}\label{strong}
{(id\times Ff)^{-1}\<_Y}\;\subseteq\; \coprod_{Ff\times id}\<_X.
\end{equation}
\end{definition}

According to \cite{HughesJacobs04}, condition (\ref{strong}) is equivalent to
(a) $F$ being stable and (b) for every relation $R\subseteq X\times Y$,
\begin{equation}\label{cond-right}
{\rel{F}(R)\com\<_X}\;\subseteq\;{\<_Y\com\rel{F}(R)}.
\end{equation}

Right-stability was introduced by arguing that it is easier to check
than plain stability, while being satisfied by nearly
all orders discussed in that paper. Surprisingly, one cannot find in
\cite{HughesJacobs04} a clear explanation of the reason why right-stable orders
are easier to analyze. In our opinion, the crucial fact is that from
(\ref{cond-right}) we can immediately conclude that
\begin{equation}\label{cond2}
{\<_Y\com\rel{F}(R)\com\<_X}\;=\;{\<_Y\com\rel{F}(R)},
\end{equation}
so that the coalgebraic simulations for a right-stable order $\<$ can be
equivalently defined by means of the asymmetric definition on the
right-hand side of equality (\ref{cond2}). If the order $\<$ can be used only on
one of the sides of the definition, the verification of the properties of the
induced coalgebraic simulations becomes much easier than when using the original
definition.

It was quite surprising to discover that the easiest way to prove the
properties of the ``simulated by'' relations which come from symmetric
properties such as composition-preserving (that are also satisfied by the
corresponding inverse relations ``simulates'') is to break
that symmetry by considering the asymmetric definition of coalgebraic
simulations that only use $\<_Y$; certainly, this is only possible when the
defining order $\<$ is right-stable.

Stability is used in \cite[Lemma 5.3]{HughesJacobs04} to prove that lax relation
lifting preserves composition of relations, which is needed
to prove \cite[Lemma 5.4(2)]{HughesJacobs04}, the crucial fact that the induced
similarity relation is transitive; this need not be the case for the
simulation notion defined by an arbitrary order $\<$.

\section{On stability of simulation and anti-simulation}

Plain simulations between labeled transition systems can be defined as
coalgebraic simulations considering the functor $F=\ps^A$ ($G^A$
denote the funtor $X\mapsto (G(X))^A$) with the order
$\<$ given by $\alpha\<\beta$ for $\alpha,\beta:A\lra\ps X$ iff for all $a\in
A,\;$ $\alpha(a)\subseteq\beta(a)$.

\begin{lemma}
The order $\<$ defining plain simulations for labeled transition systems is
right-stable.
\end{lemma}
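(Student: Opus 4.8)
The plan is to verify the defining condition of right-stability directly, namely inclusion~(\ref{strong}), for the functor $F=\ps^A$ with the stated pointwise-inclusion order. First I would unfold both sides of~(\ref{strong}) into explicit set-theoretic terms. Fix a function $f:X\lra Y$. An element of $FX=(\ps X)^A$ is a map $\alpha:A\lra\ps X$, and the action of the functor on morphisms is $Ff(\alpha)=f\com\alpha$, that is, $(Ff(\alpha))(a)=f(\alpha(a))=\{f(x)\mid x\in\alpha(a)\}$. The left-hand side, $(id\times Ff)^{-1}\<_Y$, consists of those pairs $(\alpha,\beta)\in FX\times FY$ with $\alpha:A\lra\ps X$ and $\beta:A\lra\ps Y$ such that $Ff(\alpha)\<_Y\beta$, i.e.\ $f(\alpha(a))\subseteq\beta(a)$ for every $a\in A$. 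The right-hand side, $\coprod_{Ff\times id}\<_X$, consists of those pairs $(\alpha',\beta)$ for which there exists $\gamma:A\lra\ps X$ with $\gamma\<_X\alpha'$ (i.e.\ $\gamma(a)\subseteq\alpha'(a)$ for all $a$) and $Ff(\gamma)=\beta$ (i.e.\ $f(\gamma(a))=\beta(a)$ for all $a$).

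So the goal reduces to the following pointwise claim: given $\alpha:A\lra\ps X$ and $\beta:A\lra\ps Y$ with $f(\alpha(a))\subseteq\beta(a)$ for all $a$, I must produce a witness $\gamma:A\lra\ps X$ satisfying $\gamma(a)\subseteq\alpha(a)$ and $f(\gamma(a))=\beta(a)$ for all $a$. The natural candidate is to take a preimage inside $\alpha(a)$: for each $a\in A$ set $\gamma(a)=\alpha(a)\cap f^{-1}(\beta(a))=\{x\in\alpha(a)\mid f(x)\in\beta(a)\}$. The inclusion $\gamma(a)\subseteq\alpha(a)$ is then immediate from the definition, so it remains only to check that $f(\gamma(a))=\beta(a)$.

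For this last equality I would argue both inclusions. The inclusion $f(\gamma(a))\subseteq\beta(a)$ is clear since every $x\in\gamma(a)$ satisfies $f(x)\in\beta(a)$ by construction. For the reverse inclusion $\beta(a)\subseteq f(\gamma(a))$, take $y\in\beta(a)$; here is where the hypothesis $f(\alpha(a))\subseteq\beta(a)$ must be used carefully, and this is the step I expect to be the main obstacle. The issue is that $y\in\beta(a)$ need not lie in the image $f(\alpha(a))$ at all, in which case no $x\in\alpha(a)$ maps to it and the equality $f(\gamma(a))=\beta(a)$ fails. In other words, the naive witness only gives $f(\gamma(a))=\beta(a)\cap f(\alpha(a))$, which equals $\beta(a)$ only when $\beta(a)\subseteq f(\alpha(a))$—precisely the \emph{reverse} of the hypothesis.

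I therefore expect that the correct statement is not a literal equality $f(\gamma(a))=\beta(a)$ but the relational condition encoded by $\coprod_{Ff\times id}$, and I would reread the definitions of $\coprod$ and of the pullback $(id\times Ff)^{-1}$ in the paper to make sure I have the direction of the comparison right. The clean route, rather than wrestling with~(\ref{strong}) directly, is to instead establish the equivalent condition~(\ref{cond-right}), namely ${\rel{F}(R)\com\<_X}\subseteq{\<_Y\com\rel{F}(R)}$, together with stability of $F$; the excerpt tells us these two together are equivalent to right-stability. Condition~(\ref{cond-right}) unfolds to: given $(u,v)$ with $u\mathbin{\rel{F}(R)}w$ and $w\<_X v$ in $FX$, find $w'\in FY$ with $u\<_Y w'$ and $w'\mathbin{\rel{F}(R)}v$. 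Using the explicit description of $\rel{F}(R)$ from the excerpt and the fact that enlarging a set on the $X$-side under $\subseteq$ can be matched by enlarging the corresponding set on the $Y$-side while preserving relatedness under $R$, this inclusion should go through monotonically because $\subseteq$ points ``the same way'' on both coordinates. Stability of the polynomial-style functor $\ps^A$ is standard, so the remaining work is the monotonicity argument for~(\ref{cond-right}), which I anticipate being routine once the witnesses are set up correctly.
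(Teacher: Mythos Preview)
Your overall strategy—verifying condition~(\ref{strong}) directly with the witness $\gamma(a)=\alpha(a)\cap f^{-1}(\beta(a))$—is exactly right, and in fact the paper offers no proof of this lemma, so a direct check is all that is expected. The problem is a bookkeeping error in how you unpack the two sides of~(\ref{strong}), and you actually detected its symptom yourself when you observed that you needed ``precisely the reverse of the hypothesis''.

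Look at the types. Since $\<_Y\subseteq FY\times FY$, the map $id\times Ff$ in $(id\times Ff)^{-1}\<_Y$ must land in $FY\times FY$; hence $id=id_{FY}$ and the domain is $FY\times FX$, not $FX\times FY$. Likewise $\coprod_{Ff\times id}\<_X$ is the image of $\<_X\subseteq FX\times FX$ under $Ff\times id_{FX}:FX\times FX\to FY\times FX$. So condition~(\ref{strong}) reads: for every $(\beta,\alpha)\in FY\times FX$ with $\beta\<_Y Ff(\alpha)$, there exists $\gamma\in FX$ with $Ff(\gamma)=\beta$ and $\gamma\<_X\alpha$. Pointwise this is: if $\beta(a)\subseteq f(\alpha(a))$ for all $a$, find $\gamma$ with $f(\gamma(a))=\beta(a)$ and $\gamma(a)\subseteq\alpha(a)$.

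With the hypothesis in this (correct) direction your candidate $\gamma(a)=\alpha(a)\cap f^{-1}(\beta(a))$ works immediately: $\gamma(a)\subseteq\alpha(a)$ is trivial; $f(\gamma(a))\subseteq\beta(a)$ holds by construction; and for $y\in\beta(a)$ the hypothesis $\beta(a)\subseteq f(\alpha(a))$ gives $x\in\alpha(a)$ with $f(x)=y$, whence $x\in\gamma(a)$ and $y\in f(\gamma(a))$. That is the whole proof—there is no need to fall back on condition~(\ref{cond-right}) or a separate stability argument. (Incidentally, your unpacking of~(\ref{cond-right}) has the same swap: with the paper's composition convention the left-hand side is ``$u\<_X w$ and $w\mathrel{\rel{F}(R)}v$'', not ``$u\mathrel{\rel{F}(R)}w$ and $w\<_X v$''.)
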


\begin{corollary}\label{cor1}
Plain simulations between labeled transition systems can be defined as the
$(\<_Y\com\rel{F})$-coalgebras.
\end{corollary}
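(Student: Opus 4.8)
The plan is to verify the defining inclusion (\ref{strong}) directly, since for this concrete functor the set-theoretic meaning of both sides is completely transparent. First I would unfold the relevant data: for $F=\ps^A$ we have $FX=(\ps X)^A$, and for a function $f:X\lra Y$ the map $Ff:FX\lra FY$ sends $\alpha$ to the function $a\mapsto f(\alpha(a))$, the direct image of $\alpha(a)$ under $f$. With $f:X\lra Y$ fixed, both sides of (\ref{strong}) are subsets of $FY\times FX$: the left-hand side $(id\times Ff)^{-1}\<_Y$ consists of the pairs $(v,u)$ with $v\<_Y Ff(u)$, that is, $v(a)\subseteq f(u(a))$ for all $a\in A$, while the right-hand side $\coprod_{Ff\times id}\<_X$ consists of the pairs $(Ff(u'),u)$ with $u'\<_X u$, i.e.\ the pairs $(v,u)$ for which there exists $u'\in FX$ with $u'(a)\subseteq u(a)$ for all $a$ and $Ff(u')=v$.

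Next I would take an arbitrary pair $(v,u)$ in the left-hand side and produce the required witness $u'$. The key step is the construction
\[
u'(a)=\{x\in u(a)\mid f(x)\in v(a)\}\qquad(a\in A),
\]
so that inside each $u(a)$ we keep exactly those elements whose image lands in $v(a)$. By construction $u'(a)\subseteq u(a)$, hence $u'\<_X u$, and it remains to check $Ff(u')=v$, i.e.\ $f(u'(a))=v(a)$ for every $a$. The inclusion $f(u'(a))\subseteq v(a)$ is immediate from the definition of $u'(a)$; for the reverse inclusion one uses the hypothesis $v(a)\subseteq f(u(a))$, since any $y\in v(a)$ equals $f(x)$ for some $x\in u(a)$, and that $x$ then lies in $u'(a)$, whence $y\in f(u'(a))$. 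This shows that $(v,u)=(Ff(u'),u)$ belongs to the right-hand side, establishing (\ref{strong}) and hence right-stability.

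The heart of the argument --- and the only place where anything non-formal happens --- is this pullback construction $u'(a)=u(a)\cap f^{-1}(v(a))$ together with the verification that its image is exactly $v(a)$; everything else is pure unfolding of definitions. It is worth noting that the covariant direction of the order is precisely what makes the argument go through: we are allowed to shrink $u$ on the $X$-side to match the smaller $v$ on the $Y$-side, which is the reason why the inclusion order yields a right-stable order whereas the reversed inclusion $\sqsupseteq$ does not. As an alternative route one could instead invoke the stated equivalence of (\ref{strong}) with stability together with condition (\ref{cond-right}), but verifying (\ref{strong}) directly is the most economical path here.
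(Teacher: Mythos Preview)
Your argument is mathematically correct, but what you have actually established is the content of the Lemma that immediately precedes Corollary~\ref{cor1}, namely that the inclusion order on $\ps^A$ is right-stable. The paper states that Lemma without proof, and your construction $u'(a)=u(a)\cap f^{-1}(v(a))$ together with the surjectivity check $f(u'(a))=v(a)$ is exactly the argument one would supply for it.

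What is still missing is the one-line bridge from right-stability to the Corollary itself: once right-stability is known, equation~(\ref{cond2}) gives ${\<_Y\com\rel{F}(R)\com\<_X}={\<_Y\com\rel{F}(R)}$, so that $\<$-simulations, which by definition are the $\relord{F}{\<}$-coalgebras, coincide with the $(\<_Y\com\rel{F})$-coalgebras. The paper leaves both the proof of the Lemma and this deduction implicit, so your write-up is already more detailed than what appears there; you only need to append a sentence invoking~(\ref{cond2}) so that the final conclusion matches the statement of the Corollary rather than that of the Lemma.
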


It is worth examining the consequences of the removal of $\<_X$
from the original definition of coalgebraic simulations in this particular case.
Both $\<_X$ and $\<_Y$ correspond to the inclusion order, but when applied at
the right-hand side it means that we can reduce the set of successors of the
simulating process $q$ when simulating the execution of $a$ by $p$. This means
that starting from a set $Y'\subseteq Y$ we can obtain an adequate subset
$Y''\subseteq Y'$. Instead, the application of $\<_X$ at the left-hand
side allows to enlarge the set of successors of the simulated process $p$
and this produces a set $X''$ larger than the given $X'$: one could say that we
need to consider ``new'' information not in $X'$, while  going from $Y'$ to
$Y''$ just ``removes'' some known information.

Another interesting point arises from the fact that every use of $\<_X$ at
the left-hand side can be ``compensated'' by removing at $Y$ the added states
and this is why Corollary~\ref{cor1} was correct, because we can
always avoid the introduction of new successors in the simulated process
by simply removing them at the right-hand side. However, the opposite procedure,
to compensate the removal of states by adding them at the simulated process side
is not always possible, since in general $X$ could be not big enough.

The anti-simulations can be defined as coalgebraic simulations by taking the
reversed inclusion order instead of $\subseteq$. It is
interesting to note that it is not
right-stable as the following counterexample shows. Let $X=\{x\}$ and
$Y=\{y_1,y_2\}$ and let $f:X\lra Y$ be such that $f(x)=y_1$. With these
definitions the pair $(Y,X)\in (id\times \ps f)^{-1}(\supseteq)$, since
$Y\supseteq \{y_1\}=\ps f(X)$, but it is obvious that there is no $A\subseteq X$
such that $Y=f(A)$ because $f$ is not surjective.

However, the order defining anti-simulations is stable as a consequence of the
following general result.

\begin{lemma}\label{stable-op}
$F$ with an order $\<$ is stable iff it is stable with the order
${\<^{\mathit{op}}}$.
\end{lemma}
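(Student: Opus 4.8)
The plan is to reduce stability of $\<^{\mathit{op}}$ to stability of $\<$ by exploiting the duality between a relation and its converse. Writing $R^{\mathit{op}}$ for the converse of a relation $R$, the key step is to establish the identity
\[
\relord{F}{\<^{\mathit{op}}}(R)=\big(\relord{F}{\<}(R^{\mathit{op}})\big)^{\mathit{op}},
\]
which says that lax relation lifting with the reversed order is the converse of lax relation lifting with the original order applied to the converse relation. To prove it I would expand both sides using $\relord{F}{\<}(R)=\<_Y\com\rel{F}(R)\com\<_X$, and appeal to two facts: that ordinary relation lifting commutes with converse, $\rel{F}(R^{\mathit{op}})=(\rel{F}(R))^{\mathit{op}}$ (immediate from the span description of $\rel{F}$, since passing to $R^{\mathit{op}}$ merely swaps the two projections $r_1,r_2$), and that converse reverses composition, $(S\com T)^{\mathit{op}}=T^{\mathit{op}}\com S^{\mathit{op}}$. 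Carefully tracking which projection of $R$ versus $R^{\mathit{op}}$ is used, and checking that the two occurrences of $\<$ land on the correct sides after reversal, is the only delicate bookkeeping here, and this is where I expect the main obstacle to lie.

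With the duality identity in hand I would record two routine compatibilities between converse and preimage along functions $f:X\lra Z$ and $g:Y\lra W$: first $\big((f\times g)^{-1}(R)\big)^{\mathit{op}}=(g\times f)^{-1}(R^{\mathit{op}})$, and second $\big((Fg\times Ff)^{-1}(U)\big)^{\mathit{op}}=(Ff\times Fg)^{-1}(U^{\mathit{op}})$; both follow directly from unfolding the definitions of preimage and converse, with no reference to $F$ beyond functoriality.

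Finally I would verify the stability equation for $\<^{\mathit{op}}$ by rewriting. Starting from $\relord{F}{\<^{\mathit{op}}}\big((f\times g)^{-1}(R)\big)$, the duality identity turns it into $\big(\relord{F}{\<}(((f\times g)^{-1}(R))^{\mathit{op}})\big)^{\mathit{op}}$; the first compatibility rewrites the inner converse as $(g\times f)^{-1}(R^{\mathit{op}})$; stability of $\<$, applied to the functions $g,f$ and the relation $R^{\mathit{op}}$, replaces $\relord{F}{\<}((g\times f)^{-1}(R^{\mathit{op}}))$ by $(Fg\times Ff)^{-1}(\relord{F}{\<}(R^{\mathit{op}}))$; the second compatibility pulls the outer converse inside as $(Ff\times Fg)^{-1}((\relord{F}{\<}(R^{\mathit{op}}))^{\mathit{op}})$; and one last application of the duality identity identifies this with $(Ff\times Fg)^{-1}(\relord{F}{\<^{\mathit{op}}}(R))$, exactly the right-hand side of the stability equation for $\<^{\mathit{op}}$. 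This shows that stability of $\<$ implies stability of $\<^{\mathit{op}}$, and since $(\<^{\mathit{op}})^{\mathit{op}}=\<$, the same argument applied to $\<^{\mathit{op}}$ yields the converse implication, giving the claimed equivalence.
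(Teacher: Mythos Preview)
Your proposal is correct and follows essentially the same route as the paper: both arguments hinge on the duality identity $\relord{F}{\<^{\mathit{op}}}(R)=(\relord{F}{\<}(R^{\mathit{op}}))^{\mathit{op}}$ together with the elementary compatibilities of converse with preimages, reducing the stability equation for $\<^{\mathit{op}}$ to the one for $\<$ applied to $g$, $f$, and $R^{\mathit{op}}$. The only cosmetic difference is that the paper cites the duality identity from \cite{HughesJacobs04} rather than reproving it, and presents the computation by rewriting both sides separately instead of as a single chain.
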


\begin{proof}
It is shown in \cite[Lemma 4.2(4)]{HughesJacobs04} that
$\relord{F}{\<^{\mathit{op}}}(R)=(\relord{F}{\<}(R^{\mathit{op}}))^{\mathit{op}
}
$.
Then, on the one hand,
\begin{eqnarray*}
(Ff\times Fg)^{-1}(\relord{F}{\<^{\mathit{op}}}(R))
&=&(Ff\times Fg)^{-1}(\relord{F}{\<}(R^{\mathit{op}}))^{\mathit{op}}\\
&=& ((Fg\times Ff)^{-1}\relord{F}{\<}(R^{\mathit{op}}))^{\mathit{op}},
\end{eqnarray*}
and on the other hand,
\begin{eqnarray*}
\relord{F}{\<^{\mathit{op}}}((f\times g)^{-1}(R))
&=& (\relord{F}{\<}((f\times g)^{-1}(R))^{\mathit{op}})^{\mathit{op}}\\
&=&(\relord{F}{\<}((g\times f)^{-1}(R^{\mathit{op}})))^{\mathit{op}}.
\end{eqnarray*}

Since $R^{\mathit{op}}\subseteq Y\times X$ is a relation whenever
$R\subseteq X\times Y$ is so, and $f$, $g$, and $R$ are arbitrary,
we have shown that
\[
\relord{F}{\<}((f\times g)^{-1}(R))=(Ff\times Fg)^{-1}(\relord{F}{\<}(R))
\]
if and only if
\[
\relord{F}{\<^{\mathit{op}}}((f\times g)^{-1}(R))=(Ff\times
Fg)^{-1}(\relord{F}{\<^{\mathit{op}}}(R)),
\]
and therefore $F$ is stable for $\<$ iff it is stable for $\<^{\mathit{op}}$.
\qed
\end{proof}

\begin{corollary}\label{antisim}
The order $\<^\mathit{op}$ defining anti-simulations for transition
systems as coalgebraic simulations is stable.
\end{corollary}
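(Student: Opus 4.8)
The plan is to obtain this result purely by chaining the facts already established, since the corollary is designed to follow immediately once the pieces are in place. First I would recall that the order $\<$ defining plain simulations for labeled transition systems is right-stable, which is exactly the content of the unnumbered lemma proved just above. Next I would invoke the equivalence recorded right after the definition of right-stability, namely that condition~(\ref{strong}) is equivalent to the conjunction of (a) $F$ being stable and (b) the inclusion~(\ref{cond-right}). In particular, right-stability entails stability, so the plain simulation order $\<$ is stable.

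Having established that $\<$ is stable, I would then apply Lemma~\ref{stable-op}, which asserts that $F$ with an order $\<$ is stable if and only if it is stable with the reversed order $\<^{\mathit{op}}$. Since the reversed inclusion order is precisely the one defining anti-simulations for transition systems (as discussed just before Lemma~\ref{stable-op}, where the small counterexample with $X=\{x\}$, $Y=\{y_1,y_2\}$ showed that this order fails to be right-stable), the stability of $\<$ yields at once the stability of $\<^{\mathit{op}}$, which is the assertion of the corollary.

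I do not expect any genuinely hard step here: the corollary is deliberately a consequence of the symmetry of stability captured by Lemma~\ref{stable-op}, in contrast to the asymmetry of right-stability. The only point requiring a moment of care is the implication from right-stability to stability, but this is supplied directly by the quoted equivalence from \cite{HughesJacobs04}, so no independent computation is needed. The interest of the statement lies precisely in the contrast it exposes: although anti-simulations cannot be defined by a right-stable order, they are nonetheless definable by a stable one, so the good properties guaranteed by stability still apply to them.
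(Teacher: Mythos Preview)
Your proposal is correct and matches the paper's intended argument exactly: the corollary is an immediate consequence of Lemma~\ref{stable-op} applied to the right-stable (hence stable) order $\<$ defining plain simulations. The paper gives no explicit proof beyond placing the corollary directly after Lemma~\ref{stable-op}, and your chain of implications (right-stability $\Rightarrow$ stability, then stability of $\<$ $\Leftrightarrow$ stability of $\<^{\mathit{op}}$) is precisely what is meant.
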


One could conclude from the observation above that there is indeed a
natural argument supporting plain similarity as a ``right'' coalgebraic
similarity, definable by a right-stable order. This criterion could be adopted
to define right coalgebraic simulations, which plain similarity would
satisfy while the opposite relation ``is
simulated by'' would not. However, we immediately noticed that we
could define ``left-stable'' orders by interchanging
the roles of $Ff$ and $id$ in the definition of right-stable order, obtaining
the 
inverse inclusion in (\ref{cond1}). 

\begin{definition}
We will say that a functor $F$ with order $\<$ is \textbf{left-stable} if, for
every function $f:X\lra Y$, we have
\begin{equation}\label{left}
{(Ff\times id)^{-1}\<_Y}\;\subseteq\; \coprod_{id\times Ff}\<_X.
\end{equation}
\end{definition}

It is inmediate to check that an order $\<$ is left-stable iff the inverse
order $\<^\mathit{op}$ is right-stable. Moreover, left-stable orders have the
same structural properties that right-stable ones so that, in particular, they
are also stable and hence composition-preserving. 
But in this case it would be the inverse simulations, corresponding
to the ``is simulated by'' notion, that would be natural instead of plain
simulations.
As a conclusion, we could use right or left-stability as a criterion to choose a
natural direction for the simulation order. But the important
fact in both cases is that the simplified asymmetric definitions (using either
$\<_X$ or $\<_Y$) of coalgebraic simulations are much easier to handle than the
symmetric original definition (where both $\<_X$ and $\<_Y$ have to be used).

\section{Covariant-contravariant simulations and conformance simulations}

Covariant-contravariant simulations are defined by combining the conditions ``to
simulate'' and ``be simulated by'', using a partition of the
alphabet of actions of the compared labeled transition systems.

\begin{definition}
Given $c:X\lra\ps(X)^{Act}$ and $d:Y\lra\ps(Y)^{Act}$ labeled
transition systems for the alphabet $Act$, and $\{Act^r,Act^l,
Act^{\mathit{bi}}\}$ a partition of this alphabet, a
\textbf{$(Act^r,Act^l)$-simulation}
between $c$ and $d$ is a relation $S\subseteq X\times Y$ such
that for every $(x,y)\in S$ we have:
\begin{itemize}
\item for all $a\in Act^r\cup Act^{\mathit{bi}}$ and all
$x\stackrel{a}{\lra}x'$ there exists $y\stackrel{a}{\lra}y'$ with
$(x',y')\in S$.

\item for all $a\in Act^l\cup Act^{\mathit{bi}}$, and all
$y\stackrel{a}{\lra}y'$ there exists $x\stackrel{a}{\lra}x'$ with
$(x',y')\in S$.
\end{itemize}

We write $x\B y$, and say that $x$ is $(Act^r,Act^l)$-simulated by $y$, if
and only if there exists some $(Act^r,Act^l)$-simulation $S$ with $xSy$.
\end{definition}

A very interesting application of this kind of simulations is related with the
definition of adequate simulation notions for input/output (I/O) automata
\cite{LynchVaandrager87}. The classic approach to simulations is based on the
definition of semantics for reactive systems, where all the actions of the
processes correspond to input actions that the user must trigger.
Instead, whenever we have explicit output actions the situation is the
opposite: it is the system that produces the actions and the user who is
forced to accept the produced output. Then, it is natural to conclude that in
the simulation framework we have to dualize the simulation condition when
considering output actions, and this is exactly what our anti-simulation
relations do.

Covariant-contravariant simulations can be easily obtained as coalgebraic
simulations, as the following proposition proves.

\begin{proposition}\label{prop1}
$(Act^r,Act^l)$-simulations can be defined as the coalgebraic simulations for
the functor $F=\ps^{\mathit{Act}}$ with functorial
order $\Ord$ where, for each set $X$ and $\alpha,\alpha':Act\lra\ps(X)$,
we have $\alpha\Ord\alpha'$ if:
\begin{itemize}
\item for all $a\in Act^r\cup Act^{\mathit{bi}},\;$
$\alpha(a)\subseteq\alpha'(a)$, and

\item for all $a\in Act^l\cup Act^{\mathit{bi}},\;$
$\alpha(a)\supseteq\alpha'(a)$.
\end{itemize}
Note that in particular we have $\alpha(a)=\alpha'(a)$ for all $a\in
Act^{\mathit{bi}}$.
\end{proposition}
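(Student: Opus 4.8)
The plan is to fix a pair $(x,y)\in S$ and show that $(c(x),d(y))\in\relord{F}{\Ord}(S)$ holds if and only if the two defining clauses of an $(Act^r,Act^l)$-simulation hold at $(x,y)$; since both conditions are universally quantified over the elements of $S$, this pointwise equivalence immediately yields the proposition. First I would record the shape of the relation lifting for the base functor. For the powerset functor $\ps$ the lifting $\rel{\ps}(S)$ is the familiar Egli--Milner relation, equivalently characterised by the existence of a witness $w\subseteq S$ with $r_1(w)=u$ and $r_2(w)=v$. Because $F=\ps^{Act}$ is a product indexed by $Act$ and the order $\Ord$ is defined pointwise on actions, a witness $w\in F(S)=\ps(S)^{Act}$ for membership in $\relord{F}{\Ord}(S)$ may be chosen independently at each $a\in Act$, so the single existential over $w$ decomposes into one existential per action, and it suffices to analyse each $a$ on its own.

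Expanding the definition for the pair $(c(x),d(y))$, the condition $(c(x),d(y))\in\relord{F}{\Ord}(S)$ reads: there is $w\colon Act\lra\ps(S)$ such that $c(x)\Ord Fr_1(w)$ and $Fr_2(w)\Ord d(y)$, where $Fr_i(w)(a)=r_i(w(a))$. Reading off the definition of $\Ord$ at a fixed action $a$ yields three cases. For $a\in Act^r$ the constraints are $c(x)(a)\subseteq r_1(w(a))$ and $r_2(w(a))\subseteq d(y)(a)$; for $a\in Act^l$ they reverse to $r_1(w(a))\subseteq c(x)(a)$ and $d(y)(a)\subseteq r_2(w(a))$; and for $a\in Act^{\mathit{bi}}$ both inclusions are imposed, forcing $r_1(w(a))=c(x)(a)$ and $r_2(w(a))=d(y)(a)$.

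Next I would prove the two implications in each case, both of which are essentially bookkeeping once the inclusions above are in hand. For the direction from a coalgebraic witness to the simulation clauses, in the $Act^r$ case I take any $x\stackrel{a}{\lra}x'$; since $c(x)(a)\subseteq r_1(w(a))$ there is a pair $(x',y')\in w(a)\subseteq S$, and then $y'\in r_2(w(a))\subseteq d(y)(a)$ supplies the required matching transition $y\stackrel{a}{\lra}y'$ with $(x',y')\in S$. The $Act^l$ case is symmetric, starting from $y\stackrel{a}{\lra}y'$ and using $d(y)(a)\subseteq r_2(w(a))$ together with $r_1(w(a))\subseteq c(x)(a)$, and the $Act^{\mathit{bi}}$ case delivers both clauses at once. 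For the converse I build the witness $w$ action by action from the simulation: in the $Act^r$ case I choose for every $x'\in c(x)(a)$ a match $y'\in d(y)(a)$ with $(x',y')\in S$ and let $w(a)$ be the graph of this choice, so that $r_1(w(a))=c(x)(a)$ and $r_2(w(a))\subseteq d(y)(a)$; the $Act^l$ case dually uses a match for every $y'\in d(y)(a)$; and in the $Act^{\mathit{bi}}$ case I take the union of the two graphs, which realises the forced equalities $r_1(w(a))=c(x)(a)$ and $r_2(w(a))=d(y)(a)$.

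I expect the only genuinely delicate point to be keeping the direction of $\Ord$ aligned with the enlarge/shrink behaviour of the successor sets, exactly as foreshadowed in the discussion after Corollary~\ref{cor1}: on the $X$-side the covariant actions permit enlarging $c(x)(a)$ while the contravariant ones permit shrinking it, and dually on the $Y$-side, so one must check that the witness can always be chosen to respect these one-sided inclusions. This turns out to be harmless, because the extremal choices $r_1(w(a))=c(x)(a)$ for covariant actions and $r_2(w(a))=d(y)(a)$ for contravariant ones always satisfy the required inclusions, so no compensation across different actions is ever needed; the pointwise product structure of $\ps^{Act}$ then assembles the per-action witnesses into a single $w$, which finishes the argument.
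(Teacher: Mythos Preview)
Your proof is correct and follows essentially the same approach as the paper's: both directions are obtained by unpacking the definition of $\relord{\ps^{Act}}{\Ord}(S)$ and handling the three action types separately, constructing the required witness from the matching guaranteed by the simulation clauses. The only cosmetic difference is that you build the witness $w\in\ps(S)^{Act}$ directly as a set of matched pairs per action, whereas the paper instead constructs its projections $p^{*}=Fr_1(w)$ and $q^{*}=Fr_2(w)$ by pruning unmatched transitions and then appeals to the Egli--Milner characterisation of $\rel{\ps}$; your explicit use of the pointwise product structure of $\ps^{Act}$ to decouple the existential over $w$ into one per action makes the argument marginally cleaner.
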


\begin{proof}
Intuitively, using the order $\Ord$ on
the left-hand side of $\relord{F}{\<}(R)$ allows us to remove $a'$-transitions
when $a'\in Act^l$, whereas using it on the right-hand
side of $\relord{F}{\<}(R)$ allows us to remove $a$-transitions when $a\in
Act^r$.

Let us suppose that we have a classic covariant-contravariant simulation $\B$
between labeled transition systems $c:P\lra\ps(P)^\mathit{Act}$ and
$d:Q\lra\ps(Q)^\mathit{Act}$ defined by $c(p)(a)= \{p' \mid
p\stackrel{a}{\lra}p'\}$ and
$d(q)(a)= \{q' \mid q\stackrel{a}{\lra}q'\}$.
We must show that if $p\B q$ then there exist
$p^{*}$ and $q^{*}$ such that
\begin{equation}\label{relacion hey}
c(p)\Ord p^{*}\rel{\ps^{\mathit{Act}}}(\B)q^{*}\Ord d(q).
\end{equation}
We define $p^*$ and $q^*$ as follows:
\begin{itemize}
\item $p^*$ has the same transitions as $c(p)$, except for those
transitions
$p\stackrel{a'}{\lra}p'$ with $a'\in Act^l$ such that
there is no $q'$ with $q\stackrel{a'}{\lra} q'$ and $p'\B q'$.

\item $q^*$ has the same transitions as $d(q)$, except for those
transitions
$q\stackrel{a}{\lra}q'$ with $a\in Act^r$ such that
there is no $p'$ with $p\stackrel{a}{\lra} p'$ and $p'\B q'$.
\end{itemize}

It is immediate from these definitions that $c(p)\Ord p^*$ and $q^*
\Ord d(q)$,
so we are left with checking that $p^*\rel{\ps^{\mathit{Act}}} q^*$.

Let $p'\in p^*(a)$ with $a\in Act^r$. By construction
of $p^*$, since we have not dropped any $a$-transitions from $p^*$,
$p\stackrel{a}{\lra}p'$.
Using the fact that $\B$ is a classic covariant-contravariant
simulation, there exists $q'$ such that $q\stackrel{a}{\lra}q'$ with
$p'\B q'$,
and, again by construction, $q'\in q^*(a)$ because there is some
$p\stackrel{a}{\lra}p'$ with $p'\B q'$.
Similarly, if $p'\in p^*(a)$ with $a'\in Act^l$, by
construction of
$p^*$ there must exist some $q'$ such that $q\stackrel{a'}{\lra}q'$ with
$p'\B q'$.
Again, since we have not removed any $a'$-transitions from $d(q)$ in
$q^*$,
it must be true that $q'\in q^*(a)$.
Finally, if $p'\in p^*(a)$ with $a\in Act^\mathit{bi}$ we have that
$p\stackrel{a}{\lra}p'$ and hence there exists $q'$ such that
$q\stackrel{a}{\lra}q'$ with $p'\B q'$, but also
$q'\in q^*(a)$.

The argument that shows that for every $q'\in q^*(a)$ there exists some
$p'\in p^*(a)$ with $p' \B q'$ is analogous.

We show now the other implication, that a coalgebraic covariant-contravariant
simulation is a classic one.
In this case we start from coalgebras $c$ and $d$ that satisfy
relation~(\ref{relacion hey}) whenever $p\B q$.

If $p\stackrel{a}{\lra}p'$ for $a\in Act^r$, then $p'\in p^*(a)$
because $c(p) \Ord p^*$ and, since $p^*\rel{\ps^\mathit{Act}}(\B)q^*$, there is
some
$q'\in q^*(a)$ with $p'\B q'$. Again, the definition of $\Ord$ ensures
that $q^*(a)\subseteq d(q)(a)$ and hence $q\stackrel{a}{\lra}q'$ as
required.
Similarly, if $q\stackrel{a'}{\lra}q'$ for $a'\in Act^l$, then
$q'\in q^*(a)$ because $q^*\Ord d(q)$ and thus, as in the previous
case, there
exists $p'\in p^*(a)$ with $p'\B q'$ and $p\stackrel{a'}{\lra}p'$.
Finally if $p\stackrel{a}{\lra}p'$ for $a\in Act^\mathit{bi}$
(resp. $q\stackrel{a}{\lra}q'$), again by the definition of $\Ord$ we
have
$p'\in p^*(a)$ (resp. $q'\in q^*(a)$) and, from
$p^*\rel{\ps^\mathit{Act}}(\B)q^*$, it follows that there exists $q'\in q^*(a)$
(resp. $p'\in p^*(a)$) such that
$p'\B q'$; by the definition of $\Ord$,  $q\stackrel{a}{\lra}q'$ (resp.
$p\stackrel{a}{\lra}p'$).
\qed
\end{proof}

The other new kind of simulations in which we are interested is that of
conformance simulations, where the conformance relation in
\cite{Leduc92,Tretmans96} meets the simulation world in a nice way. In the
definition below we will write $p\stackrel{a}{\lra}$ if $p\stackrel{a}{\lra}p'$
for some $p'$.

\begin{definition}
Given $c:X\lra\ps(X)^A$ and $d:Y\lra\ps(Y)^A$ two labeled
transition systems for the alphabet $A$, a \textbf{conformance
simulation} between them is a relation $R\subseteq X\times Y$ such that whenever
$pRq$, then:
\begin{itemize}
\item For all $a\in A$,
if $p\stackrel{a}{\lra}$ we must also have
$q\stackrel{a}{\lra}$ (this means, using the usual notation for process
algebras,
that $I(p)\subseteq I(q)$).

\item For all $a\in A$ such that $q\stackrel{a} {\lra}q'$ and
$p\stackrel{a}{\lra}$, there exists some $p'$ with
$p\stackrel{a}{\lra}p'$ and $p'R q'$.
\end{itemize}
\end{definition}

Conformance simulations allow the extension of the set of actions
offered by a process, so that in particular we will have $a<a+b$, but they
also consider that a process can be ``improved'' by reducing the
nondeterminism in it, so that $ap+aq<ap$. In this way we have again
a kind of covariant-contravariant simulation, not driven by the alphabet of
actions executed by the processes but by their nondeterminism.

Once again, conformance simulations can be defined as coalgebraic simulations
taking the adequate order on the functor defining labeled transition systems.

\begin{proposition}
Conformance simulations can be obtained as the coalgebraic
simulations for the order $\CSO$ on the functor $\ps^A$, where for any set $X$
we have $u\CSO_X v$ if for every $u,v:A\lra\ps X$ and $a\in A$:
\begin{itemize}
\item either $u(a)=\emptyset$, or

\item $u(a)\supseteq v(a)$ and $v(a)\neq\emptyset$.
\end{itemize}
\end{proposition}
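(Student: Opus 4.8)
The plan is to unfold the coalgebraic simulation condition for the functor $F=\ps^A$ and the order $\CSO$, and then to match it against the two defining clauses of a conformance simulation in both directions. One first checks, by a routine case split on whether $u(a)$ is empty, that $\CSO$ is reflexive, transitive and preserved by renaming, so that it is a genuine functorial order. With this in place, a relation $R$ is a $\CSO$-simulation precisely when for every $(p,q)\in R$ there is some $w\in\ps(R)^A$ with $c(p)\CSO_X p^*$ and $q^*\CSO_Y d(q)$, where $p^*=Fr_1(w)$ and $q^*=Fr_2(w)$; concretely $p^*(a)$ is the set of first components of the pairs in $w(a)$ and $q^*(a)$ the set of their second components.

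For the implication from conformance to coalgebraic simulation I would, given $pRq$, construct $w$ by a case analysis on $a$. When $c(p)(a)=\emptyset$ I set $w(a)=\emptyset$, so $p^*(a)=q^*(a)=\emptyset$; the left inequality then holds by the empty clause of $\CSO$ since $c(p)(a)=\emptyset$, and the right inequality holds by the empty clause as well since $q^*(a)=\emptyset$. When $p\stackrel{a}{\lra}$, the first conformance clause yields $q\stackrel{a}{\lra}$, and for each $q'$ with $q\stackrel{a}{\lra}q'$ the second conformance clause supplies some $p'$ with $p\stackrel{a}{\lra}p'$ and $p'Rq'$; picking one such pair $(p',q')$ for every such $q'$ defines $w(a)\subseteq R$. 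By construction $q^*(a)=d(q)(a)\neq\emptyset$ and $p^*(a)\subseteq c(p)(a)$ with $p^*(a)\neq\emptyset$, so both inequalities hold through the nonempty clause of $\CSO$. Hence $(c(p),d(q))\in\relord{\ps^A}{\CSO}(R)$, as required.

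For the converse I would start from a witness $w$ with $c(p)\CSO_X p^*$ and $q^*\CSO_Y d(q)$ and recover the two conformance clauses. If $p\stackrel{a}{\lra}$ then $c(p)(a)\neq\emptyset$, so the nonempty clause of $c(p)\CSO_X p^*$ forces $p^*(a)\neq\emptyset$, whence $w(a)\neq\emptyset$ and $q^*(a)\neq\emptyset$; the nonempty clause of $q^*\CSO_Y d(q)$ then gives $d(q)(a)\neq\emptyset$, i.e. $q\stackrel{a}{\lra}$, which is the first clause. For the second clause, assume $q\stackrel{a}{\lra}q'$ and $p\stackrel{a}{\lra}$; arguing as before $q^*(a)\neq\emptyset$, so $q^*(a)\supseteq d(q)(a)$ and therefore $q'\in q^*(a)$, meaning that some pair $(p',q')\in w(a)\subseteq R$ exists. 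Then $p'Rq'$, and since $p'\in p^*(a)\subseteq c(p)(a)$ we obtain $p\stackrel{a}{\lra}p'$, exactly as needed.

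The step I expect to be the main obstacle is getting the empty/nonempty bookkeeping of $\CSO$ exactly right when building $w$. The order behaves asymmetrically on its two sides, so one must deliberately put $w(a)=\emptyset$ when $p$ cannot perform $a$ (even though $q$ may), so as to satisfy $q^*\CSO_Y d(q)$ through its empty clause, while conversely including \emph{all} of the $a$-successors of $q$ whenever $p$ can perform $a$, so that $q^*(a)\supseteq d(q)(a)$. It is exactly this interplay between the two clauses of $\CSO$ that simultaneously encodes the inclusion $I(p)\subseteq I(q)$ and the transition-matching condition of conformance simulations.
\qed
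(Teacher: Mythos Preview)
Your proof is correct and follows essentially the same approach as the paper. The paper's own proof checks transitivity of $\CSO$ and then defers the equivalence of the two simulation notions to an ``analogous'' argument to that of Proposition~\ref{prop1}; you carry out that analogous argument in full, with the same kind of explicit witness construction and case analysis on whether $c(p)(a)$ is empty, so your write-up is in fact more detailed and self-contained than the paper's sketch.
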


\begin{proof}
Let us first prove that $\CSO_X$ is indeed an order. It is clear that the only
not immediate property is transitivity. To check it, let us take
$u\CSO_X v\CSO_Y w$: if $u(a)=\emptyset$ we are done; otherwise, we have
$u(a)\supseteq v(a)$ and $v(a)\neq\emptyset$, so that we also have
$v(a)\supseteq w(a)$ and $w(a)\neq\emptyset$, obtaining $u(a)\supseteq
w(a)$ and $w(a)\neq\emptyset$.

Now, we can interpret that using the order $\CSO$ on
the left-hand side of $\relord{F}{\<}(R)$ allows us to remove all
$a$-transitions except for the last one, whereas using it on the
right-hand side allows us to remove all $b$-transitions for
$b\in B$, where $B$ is any set of actions. %%%%
But again, as in the proof of Proposition~\ref{prop1}, we can compensate these
additions with the corresponding removals at the other side and the proof
follows in an analogous way.
\qed
\end{proof}

Next we check that the order $\Ord$ defining covariant-contravariant
simulations is stable.

\begin{lemma}\label{lemma3}
Given a partition $\{Act^r,Act^l,Act^{\mathit{bi}}\}$ of $Act$
the order $\Ord$ for the functor $\ps^{\mathit{Act}}$  defining
covariant-contravariant simulations for transition systems is stable.
\end{lemma}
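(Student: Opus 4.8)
The plan is to prove stability directly from the definition, namely that for every $f:X\lra Z$ and $g:Y\lra W$ and every relation $R\subseteq Z\times W$,
\[
\relord{F}{\Ord}((f\times g)^{-1}(R)) = (\ps^{\mathit{Act}}f\times \ps^{\mathit{Act}}g)^{-1}(\relord{F}{\Ord}(R)),
\]
where only the inclusion $\supseteq$ needs genuine work since the reverse inclusion always holds. The key structural observation I would exploit is that the order $\Ord$ decomposes \emph{pointwise over the alphabet}: for each action $a$, the constraint is $\subseteq$ on the $a$-component when $a\in Act^r\cup Act^{\mathit{bi}}$ and $\supseteq$ on the $a$-component when $a\in Act^l\cup Act^{\mathit{bi}}$. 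Since the functor is $\ps^{\mathit{Act}}$ and renaming $\ps^{\mathit{Act}}f$ acts componentwise as $\ps f$ on each coordinate, the whole stability computation factors as an $Act$-indexed product of stability computations for the single-action functor $\ps$, each carrying either the order $\subseteq$ or its opposite $\supseteq$.

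First I would recall from the earlier part of the excerpt that plain simulations are defined by the inclusion order $\subseteq$ on $\ps$, and that this order is right-stable (the unproved Lemma), hence stable; and that by Lemma~\ref{stable-op} the opposite order $\supseteq$ is stable as well (this is exactly Corollary~\ref{antisim}). So for the single-action functor $\ps$, both $\subseteq$ and $\supseteq$ give stable orders. Next I would argue that stability is preserved under taking $A$-indexed powers of a functor when the order is the componentwise combination of stable orders on each copy: the relation lifting $\relord{\ps^{\mathit{Act}}}{\Ord}$ unfolds into a conjunction, over $a\in Act$, of the liftings $\relord{\ps}{\subseteq}$ or $\relord{\ps}{\supseteq}$ applied to the $a$-components, and both $(f\times g)^{-1}$ and $(\ps^{\mathit{Act}}f\times \ps^{\mathit{Act}}g)^{-1}$ distribute through this conjunction componentwise. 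Stability of each factor then yields stability of the product.

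I expect the main obstacle to be the componentwise decomposition step: the witness $w\in \ps^{\mathit{Act}}(R)$ appearing in the expansion of $\relord{F}{\Ord}(R)$ must be reassembled from the per-action witnesses supplied by the single-action stability arguments. Because $\ps^{\mathit{Act}}(R) = (\ps R)^{\mathit{Act}}$ and the projections $\ps^{\mathit{Act}}r_i$ act coordinatewise, a family of witnesses $\{w_a\}_{a\in Act}$ chosen independently for each action does glue into a single $w$; the subtle point is checking that the shared relation $R$ (rather than separate relations per action) is respected simultaneously, which it is precisely because each $w_a$ ranges over the \emph{same} $\ps R$. Once this gluing is justified, the two inclusions follow by reading off the single-action stability equalities coordinate by coordinate. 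I would present this as: reduce to the single-action case via componentwise unfolding, invoke stability of $\subseteq$ and $\supseteq$ on $\ps$, and glue the witnesses back; the remaining verifications are the routine set-theoretic manipulations that the ``always holds'' footnotes already flag as automatic for one direction.
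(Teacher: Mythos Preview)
Your proposal is correct and follows essentially the same route as the paper: decompose $\Ord$ pointwise into per-action orders on $\ps$, establish that each component order is stable, and then argue that stability is inherited by the $Act$-indexed product via coordinatewise gluing of witnesses. The only cosmetic difference is that the paper obtains stability of the $\supseteq$-components by observing they are \emph{left}-stable (and uses equality for $Act^{\mathit{bi}}$), whereas you invoke Lemma~\ref{stable-op}/Corollary~\ref{antisim}; both routes yield the same per-component stability and the subsequent product argument is identical to the paper's later Proposition that a product of stable component orders is stable.
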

\begin{proof}
It is clear that the order $\Ord$ can be obtained as the
product of a family of orders $\<^a$ for the functor $\ps$, with $a\in Act$.
This is indeed the case taking ${\<^a_X}={\subseteq_X}$ for $a\in
Act^r$, ${\<^a_X}={\supseteq_X}$ for $a\in Act^l$ and ${\<^a_X}={=_X}$ for $a\in
Act^{\mathit{bi}}$. Then it is easy to see that to obtain that $\Ord$ is stable
it is enough to prove that each of the orders $\<^a$ is stable.

This latter requirement is straightforward because, for $a\in Act^r$, $\<^a$ is
right-stable; for $a\in Act^l$ the order $\<^a$ is left-stable; and for $a\in
Act^{\mathit{bi}}$, $\<^a$ is the equality relation, which is both right and
left-stable, for every functor $F$. \qed
\end{proof}

Certainly, the order defining covariant-contravariant simulations is not
right-stable nor left-stable, but in the
proof above we have used the power of these two properties thanks to the fact
that the order $\Ord$ can be factorised as the product of a family of orders
that are either right-stable or
left-stable. Then we can obtain the following sequence of general definitions
and results, from which Lemma~\ref{lemma3} could be
obtained as a simple particular case.\footnote{Instead of removing the above,
we have preferred to maintain the sequence of results in the order in which we
got them, starting with our motivating example.}

\begin{definition}
We say that an order $\<$ on a functor $F^A$ is \textbf{action-distributive} if
there
is a family of orders $\<^a$ on $F$ such that
\[
f\< g \iff \textrm{$f(a)\<^a g(a)$ for all $a\in A$}.
\]
Whenever $\<$ can be distributed in this way we will write $\<\;=\prod_{a\in
A}\<^a$.
\end{definition}

\begin{definition}
We say that an action-distributive order $\<$ on $F^A$ is
\textbf{side stable} if for the decomposition  ${\<}={\prod_{a\in A}\<^a}$ we
have that each order $\<^a$ is either right-stable or left-stable.

By separating the right-stable and the left-stable components we obtain ${\<}={\<^l\times\<^r}$, where $A^r$ (resp. $A^l$) collects the set of
arguments\footnote{We have assumed here a partition $\{A^l, A^r\}$ of the set $A$ into two sets of right-stable and left-stable components.
Obviously, if there were some arguments $a\in A$ on which $\<^a$ is both right-stable and left-stable then the decomposition would not be
unique, but the result would be valid for any such decomposition.}
$a\in A$ with $\<^a$ right-stable (resp. left-stable). We extend $\<^l$ and
$\<^r$ to obtain a pair of orders on $F^A$, $\<^{\bar{l}}$ and $\<^{\bar{r}}$, defined by:
\begin{itemize}
\item $f\<^{\bar{r}} g$ iff $f(a)\<^a g(a)$ for all $a\in A^r$ and $f(a)=g(a)$
for all $a\in A^l$.

\item $f\<^{\bar{l}} g$ iff $f(a)\<^a g(a)$ for all $a\in A^l$ and $f(a)=g(a)$
for all $a\in A^r$.
\end{itemize}
\end{definition}

\begin{proposition}
The order $\<^{\bar{l}}$ is left-stable, while $\<^{\bar{r}}$ is right-stable. We have ${\<}= {(\<^{\bar{l}}\com \<^{\bar{r}})} =
{(\<^{\bar{r}}\com \<^{\bar{l}})}$, and therefore we also have $\<\; = (\<^{\bar{l}}\cup \<^{\bar{r}})^{*}$.
\end{proposition}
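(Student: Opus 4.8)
The plan is to reduce all four assertions to a single structural fact about action-distributive orders, and then to read off the factorisation and the closure formula by elementary relation algebra. The key observation is that $\<^{\bar{r}}$ and $\<^{\bar{l}}$ are themselves action-distributive orders on $F^A$ whose components are uniform: the order $\<^{\bar{r}}$ is distributed as the product of the components $\<^a$ for $a\in A^r$ together with the equality relation for $a\in A^l$, and every such component is right-stable, since the $\<^a$ with $a\in A^r$ are right-stable by hypothesis and equality is both right- and left-stable for any functor (as already noted in the proof of Lemma~\ref{lemma3}). Dually, $\<^{\bar{l}}$ is distributive with every component left-stable. Hence the first two claims both follow from the lemma: \emph{if $\<=\prod_{a\in A}\<^a$ is action-distributive on $F^A$ and each $\<^a$ is right-stable (resp.\ left-stable), then $\<$ is right-stable (resp.\ left-stable)}.

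I would prove this lemma directly from the definition of right-stability, the left-stable case being symmetric (or, alternatively, obtained by passing to the opposite order and using that $\<$ is left-stable iff $\<^{\mathit{op}}$ is right-stable). Write $G=F^A$, so that $Gf(\alpha)(a)=Ff(\alpha(a))$. Given $(\beta,\alpha)\in(id\times Gf)^{-1}\<_Y$, i.e.\ $\beta(a)\<^a_Y Ff(\alpha(a))$ for every $a$, each coordinate gives $(\beta(a),\alpha(a))\in(id\times Ff)^{-1}\<^a_Y$, so right-stability of $\<^a$ furnishes a witness $\gamma_a\in FX$ with $Ff(\gamma_a)=\beta(a)$ and $\gamma_a\<^a_X\alpha(a)$. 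Assembling $\gamma\in GX$ by $\gamma(a)=\gamma_a$ yields $Gf(\gamma)=\beta$ and $\gamma\<_X\alpha$, that is $(\beta,\alpha)\in\coprod_{Gf\times id}\<_X$, which is exactly right-stability of $\<$. The one point with genuine content is that the coordinatewise witnesses can be reassembled because the order is a literal product over $A$; this is precisely where the reasoning of Lemma~\ref{lemma3} is refined from plain stability to the directed versions.

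For the factorisation I would exhibit explicit intermediate elements, recalling that relation composition here reads right-to-left, as in the expansion of $\relord{F}{\<}$. If $f\< g$, define $h$ to agree with $g$ on $A^l$ and with $f$ on $A^r$; then $f\<^{\bar{l}}h$ and $h\<^{\bar{r}}g$ hold immediately from the definitions, giving $f\,(\<^{\bar{r}}\com\<^{\bar{l}})\,g$, while the complementary choice (agreeing with $f$ on $A^l$ and with $g$ on $A^r$) gives $f\,(\<^{\bar{l}}\com\<^{\bar{r}})\,g$. The reverse inclusions are immediate from transitivity of each $\<^a$, so $\<=\<^{\bar{r}}\com\<^{\bar{l}}=\<^{\bar{l}}\com\<^{\bar{r}}$. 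Finally, $\<=(\<^{\bar{l}}\cup\<^{\bar{r}})^{*}$ is purely formal: both $\<^{\bar{l}}$ and $\<^{\bar{r}}$ are contained in $\<$ (each equality constraint is the reflexive instance of the corresponding $\<^a$), so $(\<^{\bar{l}}\cup\<^{\bar{r}})^{*}\subseteq\<$ by reflexivity and transitivity of the order $\<$, whereas the converse inclusion is exactly the factorisation $\<=\<^{\bar{r}}\com\<^{\bar{l}}$, a path of length two in $\<^{\bar{l}}\cup\<^{\bar{r}}$.

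The main obstacle is the bookkeeping in the distributivity lemma: once the coordinatewise witness construction is matched correctly against the definition of right-stability for $F^A$, the factorisation and the closure formula are routine. The care concentrates on getting the types in $(id\times Gf)^{-1}\<_Y\subseteq\coprod_{Gf\times id}\<_X$ right and on the legitimacy of reassembling the per-coordinate witnesses into a single element of $GX$.
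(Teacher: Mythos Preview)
Your proof is correct and complete. The paper, however, states this proposition without proof, evidently regarding it as routine once the definitions of $\<^{\bar{l}}$ and $\<^{\bar{r}}$ are in place; so there is no argument in the paper to compare against, and what you have written fills in precisely the details the authors leave implicit. Your coordinatewise witness construction for the right-stability of an action-distributive order with uniformly right-stable components is the natural argument (and is in the spirit of the paper's Lemma~\ref{lemma3} and the later proposition that products of stable orders are stable), and your explicit intermediate element for the factorisation, together with the elementary derivation of the closure identity, are exactly what one would expect.
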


\begin{proposition}\label{prop-rl}
For any side stable order $\<$ on $F^A$, if we have a decomposition ${\<}
= {\<^l\times\<^r}$ based on a partition of $A$ into a set of
right-stable components $A^r$ and another set of left-stable components $A^l$,
then we can obtain the coalgebraic simulations for $\<$ as the
$(\<^{\bar{r}}_Y\com\rel{F}\com\<^{\bar{l}}_X)$-coalgebras.
\end{proposition}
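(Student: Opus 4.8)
The plan is to reduce the statement to a single identity between relation liftings, proved pointwise in the relation argument. Concretely, I would show that for every relation $R\subseteq X\times Y$,
\[
\<_Y\com\rel{F^A}(R)\com\<_X \;=\; \<^{\bar{r}}_Y\com\rel{F^A}(R)\com\<^{\bar{l}}_X .
\]
Since a $\<$-simulation is by definition a $\relord{F^A}{\<}$-coalgebra and $\relord{F^A}{\<}(R)$ is exactly the left-hand side, this identity says precisely that the $\<$-simulations coincide with the $(\<^{\bar{r}}_Y\com\rel{F^A}\com\<^{\bar{l}}_X)$-coalgebras, which is the claim. Everything below is a purely relation-algebraic manipulation, so the coalgebra conclusion is immediate once the identity is established.

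Before attacking the identity I would record the one-sided simplifications that drive the argument, all of which follow from the previous proposition. Because $\<^{\bar{r}}$ is right-stable, the reasoning behind (\ref{cond2}) gives the right-stable collapse, and in particular the inclusion $\rel{F^A}(R)\com\<^{\bar{r}}_X\subseteq\<^{\bar{r}}_Y\com\rel{F^A}(R)$ from (\ref{cond-right}). Dually, since $\<^{\bar{l}}$ is left-stable, interchanging the two sides yields both $\<^{\bar{l}}_Y\com\rel{F^A}(R)\subseteq\rel{F^A}(R)\com\<^{\bar{l}}_X$ and the left-stable collapse $\<^{\bar{l}}_Y\com\rel{F^A}(R)\com\<^{\bar{l}}_X=\rel{F^A}(R)\com\<^{\bar{l}}_X$. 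I would also freely use reflexivity and transitivity of the preorders $\<$, $\<^{\bar{l}}$, $\<^{\bar{r}}$, and the containments $\<^{\bar{r}}_Z\subseteq\<_Z$, $\<^{\bar{l}}_Z\subseteq\<_Z$.

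For the identity itself the inclusion $\supseteq$ is trivial, as $\<^{\bar{r}}_Y\subseteq\<_Y$ and $\<^{\bar{l}}_X\subseteq\<_X$. For $\subseteq$ I would use the two factorisations $\<_Y=\<^{\bar{r}}_Y\com\<^{\bar{l}}_Y$ and $\<_X=\<^{\bar{l}}_X\com\<^{\bar{r}}_X$ supplied by the previous proposition, so that the lifting becomes
\[
\<_Y\com\rel{F^A}(R)\com\<_X=\<^{\bar{r}}_Y\com\bigl(\<^{\bar{l}}_Y\com\rel{F^A}(R)\com\<^{\bar{l}}_X\bigr)\com\<^{\bar{r}}_X .
\]
The left-stable collapse rewrites the bracketed term as $\rel{F^A}(R)\com\<^{\bar{l}}_X$, leaving a trailing factor $\<^{\bar{r}}_X$ to be eliminated.

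Disposing of this trailing $\<^{\bar{r}}_X$ is the step I expect to be the only real obstacle: it sits behind $\<^{\bar{l}}_X$, so right-stability cannot be applied to it directly. The resolution is to observe that the previous proposition gives $\<^{\bar{l}}_X\com\<^{\bar{r}}_X=\<_X=\<^{\bar{r}}_X\com\<^{\bar{l}}_X$, i.e.\ the two components genuinely commute as relations; rewriting $\rel{F^A}(R)\com\<^{\bar{l}}_X\com\<^{\bar{r}}_X$ as $\rel{F^A}(R)\com\<^{\bar{r}}_X\com\<^{\bar{l}}_X$ brings $\<^{\bar{r}}_X$ adjacent to $\rel{F^A}(R)$. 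Right-stability then pushes it across to the left as $\<^{\bar{r}}_Y\com\rel{F^A}(R)\com\<^{\bar{l}}_X$, and the two adjacent copies of $\<^{\bar{r}}_Y$ merge by transitivity, giving exactly $\<^{\bar{r}}_Y\com\rel{F^A}(R)\com\<^{\bar{l}}_X$ and closing the inclusion. As $R$ is arbitrary, the displayed identity holds for all $R$, and the coalgebra characterization follows.
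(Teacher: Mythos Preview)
Your proof is correct and follows essentially the same route as the paper: factor $\<$ as a composite of $\<^{\bar r}$ and $\<^{\bar l}$, use the one-sided collapse afforded by right- (resp.\ left-) stability, and use the commutation $\<^{\bar l}\com\<^{\bar r}=\<^{\bar r}\com\<^{\bar l}$ to bring the remaining factor into position. The only cosmetic difference is the order of application: the paper factors so that the $\<^{\bar r}$ terms are innermost, applies the right-stable collapse first, commutes on the $Y$ side, and finishes with the left-stable collapse---a chain of equalities throughout---whereas you put the $\<^{\bar l}$ terms innermost, collapse those first, commute on the $X$ side, and then push the trailing $\<^{\bar r}_X$ across using the right-stability \emph{inclusion} plus transitivity (you could equally have used the right-stable collapse $\<^{\bar r}_Y\com\rel{F^A}(R)\com\<^{\bar r}_X=\<^{\bar r}_Y\com\rel{F^A}(R)$ there and kept equalities all the way).
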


\begin{proof}
By definition,
$\relord{F}{\<}(R)=\;\<_Y\com\rel{F}(R)\com\<_X$. Since
${\<}= {(\<^{\bar{r}}\com\<^{\bar{l}})} = {(\<^{\bar{l}}\com\<^{\bar{r}})}$,
we have:
\[
\begin{array}{rcl}
\<_Y\com\rel{F}(R)\com\<_X
&=&(\<^{\bar{l}}_Y\com\<^{\bar{r}}_Y)\com\rel{F}(R)\com(\<^{\bar{r}}_X\com\<^{\bar{l}}_X)\\
&=&\<^{\bar{l}}_Y\com (\<^{\bar{r}}_Y\com\rel{F}(R)\com\<^{\bar{r}}_X)\com\<^{\bar{l}}_X\\
&=&(\<^{\bar{l}}_Y\com\<^{\bar{r}}_Y)\com\rel{F}(R)\com\<^{\bar{l}}_X
\qquad\textrm{(by right-stability of $\<^{\bar{r}}$)}\\
&=&\<^{\bar{r}}_Y\com(\<^{\bar{l}}_Y\com\rel{F}(R)\com\<^{\bar{l}}_X)
\qquad\textrm{(since $\<^{\bar{r}}$ and $\<^{\bar{l}}$ commute)}\\
&=& \<^{\bar{r}}_Y\com\rel{F}(R)\com\<^{\bar{l}}_X
\qquad\textrm{(by left-stability of $\<^{\bar{l}}$)}\\
\noalign{\qed}
\end{array}
\]
\end{proof}

The characterization above still requires the use of the order on both sides
of the $\rel{F}(R)$ operator. However, the fact that $\<^{\bar{r}}_Y$ (resp.
$\<^{\bar{l}}_X$) is right-stable (resp. left-stable) makes the application
of this decomposition as simple as when coping with either a right or
left-stable order. 

\begin{proposition}
If ${\<}={\prod_{a\in A}\<^a}$ and $\<^a$ is stable for all $a\in A$, then $\<$
is
stable.
\end{proposition}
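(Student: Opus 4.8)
The plan is to reduce the claim to the stability of the individual orders $\<^a$ by establishing a coordinatewise description of the lax relation lifting for $F^A$. The key observation is that both ingredients of ${\relord{F^A}{\<}(R)}={\<_Y\com\rel{F^A}(R)\com\<_X}$ decompose along the index set $A$: the plain lifting $\rel{F^A}(R)$ decomposes because a witness in $F^A(R)=(FR)^A$ is nothing but a family of witnesses indexed by $A$, and the two applications of $\<$ decompose because, by hypothesis, $\<$ is action-distributive with ${\<}={\prod_{a\in A}\<^a}$.

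First I would prove the following key lemma: for every relation $R\subseteq X\times Y$, every $u:A\lra FX$ and every $v:A\lra FY$,
\[
(u,v)\in\relord{F^A}{\<}(R) \iff (u(a),v(a))\in\relord{F}{\<^a}(R) \textrm{ for all } a\in A.
\]
Unfolding the definition, $(u,v)\in\relord{F^A}{\<}(R)$ means that there is some $w\in F^A(R)$ with $u\<_X(Fr_1)^A(w)$ and $(Fr_2)^A(w)\<_Y v$. Since $F^A(R)=(FR)^A$, such a $w$ is exactly a function $a\mapsto w(a)\in FR$; since $(Fr_i)^A$ acts pointwise and $\<$ is tested pointwise by action-distributivity, the two inequalities hold iff for every $a$ we have both $u(a)\<^a_X Fr_1(w(a))$ and $Fr_2(w(a))\<^a_Y v(a)$. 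As the value $w(a)$ may be chosen independently for each $a$, the existence of a single global witness $w$ is equivalent to the existence, for each $a$ separately, of a local witness $w(a)$, that is, to $(u(a),v(a))\in\relord{F}{\<^a}(R)$ for all $a$. This splitting of the existential witness into a family over $A$ is the only delicate point and, in my view, the real crux of the proof; everything else is a coordinatewise transfer.

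With the lemma in hand, stability follows mechanically. I would first note that $F^A f=(Ff)^A$ acts pointwise, so combining this with the key lemma yields, for all $a\in A$,
\[
(u,v)\in(F^A f\times F^A g)^{-1}(\relord{F^A}{\<}(R)) \iff (u(a),v(a))\in(Ff\times Fg)^{-1}(\relord{F}{\<^a}(R)).
\]
Now stability of each $\<^a$ provides the equality $(Ff\times Fg)^{-1}(\relord{F}{\<^a}(R))=\relord{F}{\<^a}((f\times g)^{-1}(R))$, and a second application of the key lemma, this time to the relation $(f\times g)^{-1}(R)$, reassembles these componentwise equalities into
\[
(F^A f\times F^A g)^{-1}(\relord{F^A}{\<}(R))=\relord{F^A}{\<}((f\times g)^{-1}(R)),
\]
which is exactly the stability of $\<$ on $F^A$. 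Note that, since each $\<^a$ already yields an equality, this chain produces the desired equality directly, without needing to invoke separately the always-valid inclusion.
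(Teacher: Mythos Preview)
Your proposal is correct and follows essentially the same route as the paper: both arguments decompose the lax lifting for $F^A$ coordinatewise, apply stability of each $\<^a$, and then reassemble. Your isolation of the key lemma and your explicit remark that the witness $w\in(FR)^A$ can be chosen independently at each $a\in A$ make the argument cleaner than the paper's more compressed chain of equivalences, but the content is the same.
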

\begin{proof}
The result follows from the following chain of implications:
\[
\begin{array}{c@{\hskip .3cm}l}
& (u,v)\in (Ff\times Fg)^{-1}\relord{F}{\<}(R)\\
\Longleftrightarrow & Ff(u)\< z'\rel{F}(R) w'\<Fg(v)\\
\Longleftrightarrow & Ff(u)(a)\<^a
z'(a) \rel{F^a}(R) w'(a)\<^a Fg(v)(a),\;\textrm{for all $a$}\\
\Longleftrightarrow & (u(a),v(a))\in (Ff\times
Fg)^{-1}\relord{F}{\<^a}(R),\;\textrm{for all $a$}\\
\Longrightarrow & (u(a),v(a))\in \relord{F}{\<^a}((f\times
g)^{-1}R),\;\textrm{for all $a$}\\
\Longleftrightarrow & u(a)\<^a x'(a)\rel{F}((f\times g)^{-1}R)
y'(a)\<^a v(a),\;\textrm{for all $a$}\\
\Longleftrightarrow & (u,v)\in \relord{F}{\<}((f\times g)^{-1}R) \\
\end{array}
\]
\vspace{-1.3cm} 

\phantom{a}\qed
\end{proof}

\begin{corollary}
Any side stable order is stable.
\end{corollary}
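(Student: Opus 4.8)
The plan is to obtain this as an immediate consequence of the preceding proposition, which asserts that a product order $\<\;=\prod_{a\in A}\<^a$ is stable whenever each factor $\<^a$ is stable. Since a side stable order is by definition action-distributive and each of its component orders $\<^a$ is required to be either right-stable or left-stable, the entire argument reduces to checking that right-stability and left-stability each imply stability; once that is in hand, the product proposition closes the case.

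For the right-stable factors there is nothing to prove: as recorded just after the definition of right-stability, condition (\ref{strong}) is equivalent to the conjunction of (a) $F$ being stable and (b) the inclusion (\ref{cond-right}), so right-stability already contains stability as its first component. For the left-stable factors I would route the argument through op-duality. An order $\<^a$ is left-stable exactly when $(\<^a)^{\mathit{op}}$ is right-stable; by the previous observation $(\<^a)^{\mathit{op}}$ is then stable, and Lemma~\ref{stable-op} transports stability back across the op to yield that $\<^a$ itself is stable. This is precisely the claim already made in the text that left-stable orders share the structural properties of right-stable ones.

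Having established that every factor $\<^a$ of a side stable order is stable, I would simply apply the product proposition to $\<\;=\prod_{a\in A}\<^a$ and conclude that $\<$ is stable, which is exactly the statement of the corollary.

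There is no genuine obstacle here: the corollary is merely a packaging of results already assembled in this section. If anything, the only step requiring a moment's care is the left-stable case, where stability is not asserted as a definitional equivalence but is derived via the duality of Lemma~\ref{stable-op} together with the equivalence ``left-stable iff the inverse order is right-stable''; everything else is bookkeeping on the definition of side stability.
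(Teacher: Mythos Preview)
Your proposal is correct and follows essentially the same route the paper intends: the corollary is simply the preceding product-stability proposition applied to the factors $\<^a$, each of which is stable because right-stability implies stability by the equivalence recalled after condition~(\ref{strong}), and left-stability implies stability via the duality with Lemma~\ref{stable-op}, as the paper itself notes informally when introducing left-stability. There is no divergence in method; you have merely spelled out what the paper leaves implicit.
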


\begin{corollary}
The order $\Ord$ defining covariant-contravariant simulations is side stable
and therefore it is stable too.
\end{corollary}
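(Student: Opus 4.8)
The plan is to observe that this corollary is essentially a repackaging, in the new terminology, of what was already established in Lemma~\ref{lemma3}. Two things must be shown: first, that $\Ord$ is side stable, which is the only real content; and second, that it is therefore stable, which follows immediately from the immediately preceding corollary asserting that any side stable order is stable. So I would concentrate entirely on verifying the two clauses in the definition of side stability, namely that $\Ord$ is action-distributive and that each of its components $\<^a$ is either right-stable or left-stable.

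First I would exhibit the action-distributive decomposition. This is exactly the factorisation already recorded at the start of the proof of Lemma~\ref{lemma3}: working with the functor $\ps^{\mathit{Act}}$, one sets ${\<^a}={\subseteq}$ for $a\in Act^r$, ${\<^a}={\supseteq}$ for $a\in Act^l$, and ${\<^a}={=}$ for $a\in Act^{\mathit{bi}}$. Unwinding the definition of $\Ord$ from Proposition~\ref{prop1}, one checks directly that $\alpha\Ord\alpha'$ holds iff $\alpha(a)\<^a\alpha'(a)$ for every $a\in Act$, so that $\Ord\;=\prod_{a\in Act}\<^a$ in the sense required by the definition of action-distributive. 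This step is purely a matter of matching the three bulleted conditions in Proposition~\ref{prop1} against the three choices of $\<^a$.

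Next I would verify that each component is right-stable or left-stable. For $a\in Act^r$ the component is $\subseteq$ on $\ps$, which is precisely the order shown to be right-stable earlier in the excerpt (the lemma on plain simulations). For $a\in Act^l$ the component is $\supseteq={\subseteq^{\mathit{op}}}$, and since an order is left-stable exactly when its inverse is right-stable, $\supseteq$ is left-stable. For $a\in Act^{\mathit{bi}}$ the component is the equality relation, which is both right- and left-stable for any functor. Hence every $\<^a$ is right-stable or left-stable, and by definition $\Ord$ is side stable. Finally, invoking the preceding corollary that every side stable order is stable, I conclude that $\Ord$ is stable, which completes the argument.

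There is no genuine obstacle here: the whole proof is bookkeeping that rides on results already in hand. The only point demanding a moment's care is the $Act^l$ case, where one must remember to route $\supseteq$ to \emph{left}-stability via the inverse-order characterisation rather than mistakenly expecting right-stability; everything else is immediate.
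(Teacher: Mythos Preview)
Your proposal is correct and matches the paper's intended argument: the corollary is stated without proof in the paper precisely because it is the repackaging of Lemma~\ref{lemma3} in the new terminology that you describe, with stability then following from the immediately preceding corollary that any side stable order is stable. Your unpacking of the three cases for $\<^a$ and the routing of $\supseteq$ to left-stability via the inverse-order characterisation is exactly what the proof of Lemma~\ref{lemma3} already records.
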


Next we consider the case of conformance simulations, for which we can obtain
similar results to those proved for covariant-contravariant simulations.

\begin{lemma}
The order $\CSO$ defining conformance simulations for transition systems
is stable.
\end{lemma}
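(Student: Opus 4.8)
The plan is to mirror the treatment of covariant-contravariant simulations: first reduce the claim to a statement about the plain powerset functor, and only then exploit an emptiness-based decomposition of the order.

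First I would observe that $\CSO$ is action-distributive, $\CSO = \prod_{a\in A}\sqsubseteq^{c}$, where $\sqsubseteq^{c}$ is the order on $\ps$ given by $S\sqsubseteq^{c}T$ iff $S=\emptyset$ or $\emptyset\neq T\subseteq S$. By the proposition that a product of stable orders is stable, it then suffices to prove that $\sqsubseteq^{c}$ is stable for $\ps$. This move lets us forget the index set $A$ entirely and argue on a single copy of $\ps$.

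Second, $\sqsubseteq^{c}$ is neither right-stable nor left-stable: the anti-simulation counterexample already defeats right-stability of its reverse-inclusion part, and the freedom to jump from $\emptyset$ to an arbitrary set defeats left-stability whenever the target contains points outside the image of the renaming. Hence side-stability does not apply directly, and I would instead split $\sqsubseteq^{c}$ by emptiness into the two preorders $\CSNotEmpty$ (with $S\CSNotEmpty T$ iff $S=T$ or $\emptyset\neq T\subseteq S$) and $\CSEmpty$ (with $S\CSEmpty T$ iff $S=T$ or $S=\emptyset$). A short check shows that $\CSNotEmpty$ is left-stable (it is reverse inclusion away from $\emptyset$, for which the preimage construction $b=u\cap f^{-1}(v)$ works) and that $\CSEmpty$ is right-stable (jumps from $\emptyset$ are witnessed by $\emptyset$ itself), the equality component being both. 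One then verifies that the two commute and recompose the original order, $\sqsubseteq^{c}=\CSNotEmpty\com\CSEmpty=\CSEmpty\com\CSNotEmpty$, exactly as in the $\<^{\bar l}$/$\<^{\bar r}$ decomposition for side-stable orders.

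The hard part is the last step: concluding stability of $\sqsubseteq^{c}$ from this commuting composition. Unlike the side-stable case, here the right-stable and left-stable factors live on the same functor rather than on disjoint arguments, so the product-of-stable-orders proposition cannot be invoked to finish. I expect to discharge this obstacle by a direct witness construction for the nontrivial inclusion $(\ps f\times\ps g)^{-1}\relord{\ps}{\sqsubseteq^{c}}(R)\subseteq\relord{\ps}{\sqsubseteq^{c}}((f\times g)^{-1}R)$: given $(S,T)$ whose images are related by some witness $W_{0}\subseteq R$, if $S=\emptyset$ the empty witness already works, and otherwise one shows $T\neq\emptyset$ and, for each $t\in T$, pulls $g(t)\in r_{2}(W_{0})$ back to a pair $(s_{t},t)\in(f\times g)^{-1}R$ with $s_{t}\in S$ (using $r_{1}(W_{0})\subseteq f(S)$), so that $W_{0}'=\{(s_{t},t):t\in T\}$ is the required witness. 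The emptiness case split in this argument is precisely what the $\CSEmpty$/$\CSNotEmpty$ decomposition encodes, so the two routes are really the same proof; the only genuine subtlety is keeping track of when the ``$S=\emptyset$'' clause, rather than the reverse-inclusion clause, is the one that certifies relatedness.
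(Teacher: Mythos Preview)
Your proposal is correct and lands on essentially the same engine as the paper---a direct witness construction for the nontrivial inclusion---but you organise it differently. The paper proves the lemma by working directly on $\ps^A$: from $\ps^Af(u)\CSO z\,\rel{\ps^A}(R)\,w\CSO\ps^Ag(v)$ it defines $x(a)=u(a)\cap f^{-1}(z(a))$ and $y(a)=g^{-1}(w(a))$ and checks $u\CSO x$, $x\,\rel{\ps^A}((f\times g)^{-1}R)\,y$, and $y\CSO v$ componentwise. You instead first invoke the product-of-stable-orders proposition (already available from the covariant--contravariant section) to reduce to a single copy of $\ps$, and then build the witness by an explicit case split on $S=\emptyset$, in the non-empty case choosing $W_0'=\{(s_t,t):t\in T\}$ directly. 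Both constructions are valid; yours buys a cleaner setting (no index $a$ to carry), while the paper's avoids appealing to the product proposition. One phrasing nit: ``whose images are related by some witness $W_0\subseteq R$'' suppresses the two flanking uses of $\sqsubseteq^c$; what you actually have and correctly use is $r_1(W_0)\subseteq f(S)$ and $g(T)\subseteq r_2(W_0)$ in the non-empty case.

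On the $\CSEmpty/\CSNotEmpty$ decomposition: this is not a dead end. Immediately \emph{after} this lemma the paper proves precisely the missing general step you identify as ``the hard part'', namely that a right-stable and a left-stable order on the same functor which commute compose to a stable order. With that proposition your decomposition route finishes with no element-chasing at all, and the paper indeed records this as an alternative derivation. So both of your sketched routes are sound; you simply did not have the composition-of-commuting-stable-orders lemma available when you wrote the fallback direct argument.
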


\begin{proof}
Let $R\subseteq Z\times W$ be a relation and $f : X\lra Z$, $g: Y\lra W$
arbitrary functions. If $(u, v) \in
(\ps^Af\times\ps^Ag)^{-1}(\relord{\ps^A}{\CSO}(R))$, then there exist $z$ and
$w$ such that
\begin{equation}\label{cond-proof}
\ps^Af(u) \CSO z \mathrel{\rel{\ps^A}(R)} w \CSO \ps^Ag(v).
\end{equation}
We have to show that
$(u,v) \in \relord{\ps^A}{\CSO}((f\times g)^{-1}(R))$,
that is, there exist $x$ and $y$ such that
\[
u \CSO x \mathrel{\rel{\ps^A}((f\times g)^{-1}(R))} y \CSO v.
\]

Let us define $x : A\lra \ps(X)$ by $x(a) = u(a) \cap f^{-1}(z(a))$ and
$y : A\lra \ps(Y)$ by $y(a) = g^{-1}(w(a))$.
Then we have:
\begin{enumerate}
\item $u\CSO x$.

  If $u(a)=\emptyset$, there is nothing to prove.
  Otherwise, since $\ps^Af(u)\CSO z$ and $f(u(a))\neq\emptyset$, we have
  $f(u(a))\supseteq z(a)\neq \emptyset$ and hence
  $u(a) \supseteq u(a) \cap f^{-1}(z(a)) = x(a)\neq \emptyset$.
\item $y\CSO v$.

  If $w(a) = \emptyset$, then $y(a) = g^{-1}(w(a)) = \emptyset$.
  Otherwise, since $w\CSO \ps^Ag(v)$, we have  $w(a)\supseteq
g(v(a))\neq\emptyset$,
  so that $v(a)\neq\emptyset$ and
  $y(a) = g^{-1}(w(a)) \supseteq g^{-1}(g(v(a)))\supseteq v(a)$.
\item $x\, \mathrel{\rel{\ps^A}((f\times g)^{-1}(R))}\, y$.

  For every $a\in A$ we need to show that
  $x(a) \mathrel{\rel{\ps}((f\times g)^{-1}(R))} y(a)$, which means:
  \begin{enumerate}
  \item for every $p\in x(a)$ there exists $q\in y(a)$ such that
    $p \mathrel{(f\times g)^{-1}(R)} q$, that is, $f(p) R g(q)$; and
  \item for every $q\in y(a)$ there exists $p\in x(a)$ such that
    $p \mathrel{(f\times g)^{-1}(R)} q$, that is, $f(p) R g(q)$.
  \end{enumerate}
  In the first case, let $p\in x(a)$; by definition of $x$,
  $f(p)\in z(a)$.
  Now, from $z \mathrel{\rel{\ps^A}(R)} w$ we obtain that
  for each $p'\in z(a)$ there exists $q'\in w(a)$ such that
  $p' R q'$.
  Then, for $f(p)\in z(a)$ there exists $q'\in w(a)$ with $f(p) R q'$;
  and by definition of $y$, there exists $q\in y(a)$ with $q'=g(q)$ as
  required.

  In the second case, let $q\in y(a)$ so that $g(q)\in w(a)$.
  Again, from $z \mathrel{\rel{\ps^A}(R)} w$ it follows that there is
  $p'\in z(a)$ with $p' R g(q)$.
  Now, $f(u(a)) \supseteq z(a)$ because $u \CSO z$, so there exists
  $p\in u(a) \cap f^{-1}(z(a))$ with $f(p) = p'$, as required.
\qed
\end{enumerate}
\end{proof}

As in the case of covariant-contravariant simulations, conformance simulations
cannot be defined as coalgebraic simulations using neither a right-stable order
nor a left-stable order. But we can find in the arguments above the
basis for a decomposition of the involved order $\CSO$, according to the two
cases in its definition. Once again $\CSO$ is an action-distributive order on
$\ps^A$, but in order to obtain the adequate decomposition of $\CSO$ now
we also need to decompose the component orders $\<^a$.

\begin{definition}
We define the conformance orders\textbf{ $\CSNotEmpty$}, \textbf{$\CSEmpty$},
and \textbf{$\<^C$} on the
functor $\ps$ by:
\begin{itemize}
\item $x_1\CSEmpty x_2$ if $x_1=\emptyset$ or $x_1 = x_2$.

\item $x_1\CSNotEmpty x_2$ if $x_1\supseteq x_2$ and $x_2\neq\emptyset$,
 or $x_1 = x_2$.

\item $x_1\<^C x_2$ if $x_1\CSNotEmpty x_2$ or $x_1\CSEmpty x_2$.
\end{itemize}
\end{definition}

\begin{proposition}
The two relations $\CSEmpty$ and $\CSNotEmpty$ commute with each other:
\[(\CSEmpty\com \CSNotEmpty)=(\CSNotEmpty\com \CSEmpty),
\]
from where it follows that
$(\CSEmpty\cup\CSNotEmpty)^*\;=\;(\CSEmpty\com\CSNotEmpty)\;=\;(\CSNotEmpty\com
\CSEmpty)$. We also have $\<^C=(\CSEmpty\com \CSNotEmpty)$, from where
we conclude that $\<^C$ is indeed an order relation.
\end{proposition}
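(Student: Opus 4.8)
The plan is to prove all three assertions at once by computing the two composites explicitly and matching them against the unfolded definition of $\<^C$. Following the composition convention fixed in the definition of $\relord{F}{\<}$, a pair $(x,z)$ lies in $\CSEmpty\com\CSNotEmpty$ exactly when there is a witness $y$ with $x\CSNotEmpty y$ and $y\CSEmpty z$, and dually $(x,z)\in\CSNotEmpty\com\CSEmpty$ when there is $y$ with $x\CSEmpty y$ and $y\CSNotEmpty z$. My target is to show that both sets coincide with
\[
\{(x,z)\mid x=\emptyset \textrm{ or } x=z \textrm{ or } (x\supseteq z \textrm{ and } z\neq\emptyset)\},
\]
which is precisely the unfolding of $x\<^C z$.

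For $\CSEmpty\com\CSNotEmpty$ I would do a four-way case split on which disjunct of $\CSNotEmpty$ and of $\CSEmpty$ the witness $y$ uses. The combination using $x\supseteq y\neq\emptyset$ together with $y=\emptyset$ is inconsistent; $x\supseteq y\neq\emptyset$ with $y=z$ collapses to $x\supseteq z$ with $z\neq\emptyset$; $x=y$ with $y=\emptyset$ collapses to $x=\emptyset$; and $x=y=z$ gives $x=z$. Hence $\CSEmpty\com\CSNotEmpty$ equals the displayed set, i.e.\ $\<^C$. Running the same split for $\CSNotEmpty\com\CSEmpty$ produces the identical set; the one point needing care is the branch $x=\emptyset$, where the existence of a suitable intermediate $y$ must be secured by choosing $y=z$ and invoking the reflexive disjunct of $\CSNotEmpty$. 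This single computation simultaneously establishes the commutation $\CSEmpty\com\CSNotEmpty=\CSNotEmpty\com\CSEmpty$ and the identity $\<^C=\CSEmpty\com\CSNotEmpty$.

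It then remains to derive $(\CSEmpty\cup\CSNotEmpty)^{*}=\CSEmpty\com\CSNotEmpty$. For this I would use the standard fact that if two preorders $R$ and $S$ commute then $R\com S$ is again a preorder and equals $(R\cup S)^{*}$. First I would check that $\CSEmpty$ and $\CSNotEmpty$ are reflexive (immediate from their equality disjunct) and transitive (a short case analysis, of the same flavour as the transitivity check already carried out for $\CSO$). Reflexivity gives $\CSEmpty\cup\CSNotEmpty\subseteq\CSEmpty\com\CSNotEmpty$; commutation together with idempotence of preorders ($R\com R=R$) yields $(R\com S)\com(R\com S)=R\com(S\com R)\com S=R\com(R\com S)\com S=R\com S$, so $\CSEmpty\com\CSNotEmpty$ is a reflexive and transitive relation containing $\CSEmpty\cup\CSNotEmpty$; hence $(\CSEmpty\cup\CSNotEmpty)^{*}\subseteq\CSEmpty\com\CSNotEmpty$, while the reverse inclusion is trivial. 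Since a reflexive--transitive closure is always a preorder, the resulting chain $\<^C=\CSEmpty\com\CSNotEmpty=\CSNotEmpty\com\CSEmpty=(\CSEmpty\cup\CSNotEmpty)^{*}$ shows in particular that $\<^C$ is an order. The only real obstacle throughout is the bookkeeping of the empty-set cases in the first computation; once those are handled correctly everything else is routine.
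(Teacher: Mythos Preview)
Your argument is correct. The paper's own proof is considerably more laconic: it only writes out the commutation, and only one inclusion of it. Given $u\CSNotEmpty w$ and $w\CSEmpty v$, it cases on whether $w=\emptyset$; if so, the first relation forces $u=\emptyset$ and one may take $w'=v$, otherwise the second relation forces $w=v$ and one may take $w'=u$. The remaining claims of the proposition are not spelled out in the paper and are treated as immediate consequences.

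Your route is slightly different in organisation: instead of a witness-swapping argument for commutation, you compute both composites explicitly and identify each with the set $\{(x,z)\mid x=\emptyset \text{ or } x=z \text{ or } (x\supseteq z \text{ and } z\neq\emptyset)\}$, which is exactly $\<^C$ by definition. This has the advantage of establishing commutation and the identity $\<^C=(\CSEmpty\com\CSNotEmpty)$ in a single stroke, and your derivation of $(\CSEmpty\cup\CSNotEmpty)^{*}=\CSEmpty\com\CSNotEmpty$ from the general lemma on commuting preorders is a clean way to make explicit what the paper leaves tacit. The underlying case analysis is the same in both approaches; yours is simply more exhaustive and self-contained.
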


\begin{proof}
Let $u \mathrel{(\CSEmpty\com \CSNotEmpty)} v$: there is some $w$ such that $u
\CSNotEmpty w$ and $w \CSEmpty v$.
We need to find $w'$ such that $u\CSEmpty w'$ and $w' \CSNotEmpty v$.
If $w=\emptyset$ then it must be $u = \emptyset$ too, and we can take
$w'= v$; otherwise, it must be $v = w$ and we can take $w'= u$. The other
inclusion is similar. \qed
\end{proof}

\begin{corollary}
The order $\CSO$ defining conformance simulations can be decomposed
into $\prod_{a\in A}\<^a$ where, for each $a\in A$, we have
${\<^a}={\<^C}$ as defined above. Then, $\CSO=\prod_{a\in
A}(\CSNotEmptya \cup\CSEmptya)^*=\prod_{a\in A}(\CSNotEmptya) \com\prod_{a\in
A}(\CSEmptya)=\prod_{a\in A}(\CSEmptya) \com\prod_{a\in A}(\CSNotEmptya)$, so
that we obtain $\CSO$ as the composition of a right-stable order and a
left-stable order that commute with each other.
\end{corollary}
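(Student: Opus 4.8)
The plan is to follow the displayed chain of identities left-to-right and then read off the stability type of each of the two factors it produces.

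First I would verify the opening identity $\CSO=\prod_{a\in A}\<^C$, i.e.\ that the component order $\<^a$ equals $\<^C$ for every $a$. This is a pointwise comparison: by definition $u\CSO_X v$ holds precisely when, for each $a$, either $u(a)=\emptyset$ or ($u(a)\supseteq v(a)$ with $v(a)\neq\emptyset$), whereas $u(a)\<^C v(a)$ unfolds into the disjunction of the defining clauses of $\CSNotEmpty$ and $\CSEmpty$. The only clause not literally present in the definition of $\CSO$ is $u(a)=v(a)$, and it is absorbed: if $u(a)=v(a)=\emptyset$ it matches the first $\CSO$-clause, and if $u(a)=v(a)\neq\emptyset$ the second. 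Hence the two pointwise conditions coincide.

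Next I would propagate the factorisation of $\<^C$ through the product. The preceding proposition gives, on $\ps$, that $\<^C=(\CSNotEmpty\cup\CSEmpty)^*=(\CSNotEmpty\com\CSEmpty)=(\CSEmpty\com\CSNotEmpty)$. The key observation is that the $A$-indexed product commutes with relational composition, since a witness for a composite can be chosen coordinatewise: for any families $\{R_a\}_{a\in A}$ and $\{S_a\}_{a\in A}$ one has $\prod_{a\in A}(R_a\com S_a)=(\prod_{a\in A}R_a)\com(\prod_{a\in A}S_a)$. Instantiating this with $R_a=\CSNotEmptya$ and $S_a=\CSEmptya$ (and symmetrically) turns the pointwise factorisations into the global ones, $\CSO=(\prod_{a\in A}\CSNotEmptya)\com(\prod_{a\in A}\CSEmptya)=(\prod_{a\in A}\CSEmptya)\com(\prod_{a\in A}\CSNotEmptya)$, and the pointwise commutation of $\CSNotEmpty$ and $\CSEmpty$ lifts, by the same interchange, to commutation of the two product orders.

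It then remains to classify the two factors, and this is the step I expect to carry the real content. I would first settle the two base cases on $\ps$: $\CSEmpty$ is right-stable and $\CSNotEmpty$ is left-stable. Unwinding the definitions, right-stability of $\CSEmpty$ asks that whenever $B\CSEmpty_Y f(A)$ there be some $A'$ with $A'\CSEmpty_X A$ and $f(A')=B$; this holds by taking $A'=A$ when $B=f(A)$, and $A'=\emptyset$ when $B=\emptyset$. Left-stability of $\CSNotEmpty$ asks that whenever $f(A)\CSNotEmpty_Y B$ there be some $A''$ with $A\CSNotEmpty_X A''$ and $f(A'')=B$; here the witness is $A''=A\cap f^{-1}(B)$, which satisfies $A\supseteq A''\neq\emptyset$ and $f(A'')=B$ exactly because $\emptyset\neq B\subseteq f(A)$. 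Both directions then lift to the products $\prod_{a\in A}\CSEmptya$ and $\prod_{a\in A}\CSNotEmptya$ by choosing the required preimage witness coordinatewise, just as the stability of a product was obtained from the stability of its components. This exhibits $\prod_{a\in A}\CSEmptya$ as a right-stable order and $\prod_{a\in A}\CSNotEmptya$ as a left-stable one, which is exactly the claim.

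The subtle point, and the reason the directions come out as stated, lives in those preimage constructions: it is here that the surjectivity obstruction that defeated plain $\supseteq$ in the anti-simulation example is sidestepped. For $\CSEmpty$ the only non-equal case forces $B=\emptyset$, so the witness $\emptyset$ is always available irrespective of the image of $f$; for $\CSNotEmpty$ the hypothesis $B\subseteq f(A)$ guarantees that $A\cap f^{-1}(B)$ maps onto exactly $B$ without any surjectivity of $f$. Reversing either inclusion would reinstate the obstruction, which is precisely why $\CSEmpty$ fails to be left-stable and $\CSNotEmpty$ fails to be right-stable.
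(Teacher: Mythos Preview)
Your proof is correct and follows exactly the route the paper has in mind: the Corollary is stated without proof and is meant to be read off from the preceding Proposition (the commutation of $\CSEmpty$ and $\CSNotEmpty$ and the identity $\<^C=\CSEmpty\com\CSNotEmpty=\CSNotEmpty\com\CSEmpty$), together with the general product machinery developed just before. Your pointwise verification that $\CSO=\prod_{a\in A}\<^C$ and the coordinatewise lifting of composition and commutation through the product are precisely the implicit steps.

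Where you go beyond the paper is in actually checking the stability directions of the two base orders, which the paper never spells out. Your arguments are right: $\CSEmpty$ is right-stable (witnesses $A'=\emptyset$ or $A'=A$), and $\CSNotEmpty$ is left-stable (witness $A''=A\cap f^{-1}(B)$, using $\emptyset\neq B\subseteq f(A)$). The observation in your last paragraph, that the opposite assignments fail for the same surjectivity reason that blocks $\supseteq$, is also correct. It is worth noting that your assignment is the opposite of what the paper's final Corollary seems to suggest when it writes $\prod_a\CSNotEmptya_Y\com\rel{F}(R)\com\prod_a\CSEmptya_X$ (placing $\CSNotEmpty$ in the right-stable slot); your analysis is the correct one, so do not let that later display make you second-guess yourself.
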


\begin{proposition}
For any pair of right (resp. left)-stable orders $\<^1$, $\<^2$ on $F$, their
composition also defines a right (resp. left)-stable order on $F$.
\end{proposition}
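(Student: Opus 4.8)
The plan is to verify the defining inclusion (\ref{strong}) of right-stability directly for the composite order, rather than routing through the equivalent characterization via stability and (\ref{cond-right}); working with (\ref{strong}) is self-contained because it \emph{is} the definition of right-stability, so no separate verification of stability is required. Write $\<\;=\;\<^1\com\<^2$ for the composition, computed pointwise, so that $\<_X\;=\;\<^1_X\com\<^2_X$ and, unfolding the composition convention fixed earlier in the excerpt, $u\<_X v$ holds iff there is some $w$ with $u\<^2_X w$ and $w\<^1_X v$. Reflexivity and preservation under renaming of $\<$ are immediate from the corresponding properties of $\<^1$ and $\<^2$; I will note below that transitivity, needed for $\<$ to be a genuine order, is what forces the two factors to commute, as they do in the conformance application.

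First I would fix $f:X\lra Y$ and take an arbitrary pair in the left-hand side of (\ref{strong}) for $\<$, namely $(b,a)\in(id\times Ff)^{-1}\<_Y$, which means $b\<_Y Ff(a)$. By the composition convention this yields an intermediate $c\in FY$ with $b\<^2_Y c$ and $c\<^1_Y Ff(a)$. The main step is then a two-stage application of the hypotheses. Applying (\ref{strong}) for $\<^1$ to the pair $(c,a)$, which lies in $(id\times Ff)^{-1}\<^1_Y$, produces some $a_1\in FX$ with $Ff(a_1)=c$ and $a_1\<^1_X a$. Now $b\<^2_Y c=Ff(a_1)$, so $(b,a_1)\in(id\times Ff)^{-1}\<^2_Y$; applying (\ref{strong}) for $\<^2$ produces $a_2\in FX$ with $Ff(a_2)=b$ and $a_2\<^2_X a_1$. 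Chaining, $a_2\<^2_X a_1\<^1_X a$ gives $(a_2,a)\in\<^1_X\com\<^2_X=\<_X$ together with $Ff(a_2)=b$, that is, $(b,a)\in\coprod_{Ff\times id}\<_X$. This establishes (\ref{strong}) for $\<$.

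The left-stable case follows by duality rather than a second computation: as remarked in the excerpt, $\<^i$ is left-stable iff $\<^{i,\mathit{op}}$ is right-stable, and $(\<^1\com\<^2)^{\mathit{op}}=\<^{2,\mathit{op}}\com\<^{1,\mathit{op}}$ is a composition of two right-stable orders, which is right-stable by the case just proved; hence $\<^1\com\<^2$ is left-stable. The only genuinely delicate point is bookkeeping: one must keep the composition direction and the preimage/image directions aligned so that the witness $a_1$ yielded by the first application of (\ref{strong}) is exactly what the second application consumes through $c=Ff(a_1)$. There is no deeper obstacle, since each individual use of right-stability is a direct instance of the hypothesis, and transitivity of $\<$, when one wishes to call it an order, is not part of the right-stability inclusion and is supplied separately by commutation of the factors.
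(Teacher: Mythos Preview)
Your argument is correct and follows essentially the same route as the paper's own proof: unfold $(id\times Ff)^{-1}(\<^1_Y\com\<^2_Y)$, apply right-stability of $\<^1$ to the inner factor to pull the intermediate element back along $Ff$, then apply right-stability of $\<^2$ to pull the outer element back, and chain the two $X$-level inequalities. Your added remarks on the left-stable case by duality and on transitivity via commutation are sound elaborations, but the core computation is the same as in the paper (your $b,c,a,a_1,a_2$ are the paper's $y,y'',x,x'',x'$).
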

\begin{proof}
Given $f:X\lra Y$ we must show that
\[(id\times Ff)^{-1}(\<^1_Y\com\<^2_Y) \mathrel{\subseteq}\coprod_{(Ff\times
id)}(\<^1_X\com\<^2_X).\]

Let us assume that $(y,x)\in (id\times Ff)^{-1}(\<^1_Y\com\<^2_Y)$, that is,
$y\mathrel{(\<^1\com\<^2)} y'=Ff(x)$; then, there exists $y''\in FY$ such that
$y\<^2_Y y''$ and $y''\<^1_Y y'$. Graphically,
\begin{equation}\label{dia1}
\xymatrix@R=.8cm@C=.1cm{
{y} & {\<^2_Y}& {y''} & {\<^1_Y} & {y'}\\
& & & & {x}\ar@{|->}[u]_{Ff} }
\end{equation}
Since $\<^1_Y$ is right-stable we have that ${(id\times
Ff)^{-1}\<^1_Y}\subseteq{\coprod_{(Ff\times id)}\<^1_X}$. Hence, there
exists
$x''\in FX$ such that $Ff(x'')=y''$ and $x''\<^1_X x$, thus turning
diagram~(\ref{dia1}) into the following:
\begin{equation}\label{dia2}
\xymatrix@R=.8cm@C=.1cm{ {y} & {\<^2_Y}& {y''} & & \\
& & {x''}\ar@{|->}[u]_{Ff} & {\<^1_X}& {x} }
\end{equation}

Now, we can apply right-stability of $\<^2$: since we have
$(y,x'')\in {(id\times Ff)^{-1}\<^2_Y}\subseteq{\coprod_{(Ff\times id)}\<^2_X}$,
there exists $x'\in FX$ such that $Ff(x')=y$ and $x'\<^2_X x''$.
Thus, diagram (\ref{dia2}) becomes
\begin{equation}\label{dia3}
\xymatrix@R=.8cm@C=.1cm{
{y} & & & & \\
{x'}\ar@{|->}[u]_{Ff} &{\<^2_X} & {x''} & {\<^1_X}& {x} }
\end{equation}
which means that there exist $x',x''\in
FX$ such that $Ff(x')=y$, $x'\<^2_X x''$ and $x''\<^1_X x$, or equivalently,
that $(y,x)\in\coprod_{(Ff\times id)}(\<^1_X\com\<^2_X)$, as we had to prove.
\qed
\end{proof}

\begin{proposition}
If $\<^r$ is a right-stable order on $F$ and $\<^l$ is a left-stable order on
$F$ that commute with each other, then their composition defines a stable order
on $F$. Moreover, the coalgebraic simulations for the order
${\<}={\<^r\com\<^l}$
can be equivalently defined as the $(\<^r\com\rel{F}(R)\com\<^l)$-coalgebras.
\end{proposition}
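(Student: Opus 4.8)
The plan is to handle the ``moreover'' claim first, since the resulting simplified description of $\relord{F}{\<}$ is precisely the tool I need for the stability claim. Before either step I would record that $\<=\<^r\com\<^l$ really is a functorial preorder: reflexivity is inherited from the two reflexive factors, and transitivity follows from the commuting hypothesis via $(\<^r\com\<^l)\com(\<^r\com\<^l)=\<^r\com(\<^l\com\<^r)\com\<^l=\<^r\com(\<^r\com\<^l)\com\<^l\subseteq\<^r\com\<^l$, while preservation under renaming is inherited from $\<^r$ and $\<^l$. Hence $\relord{F}{\<}$ is well defined and the statement makes sense.

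For the simplified description I would simply rerun the computation in the proof of Proposition~\ref{prop-rl}, replacing $\<^{\bar{r}},\<^{\bar{l}}$ by $\<^r,\<^l$. Using $\<=\<^r\com\<^l=\<^l\com\<^r$, I split $\<_Y$ and $\<_X$ across the two factors, collapse the inner block $\<^r_Y\com\rel{F}(R)\com\<^r_X$ to $\<^r_Y\com\rel{F}(R)$ by right-stability of $\<^r$, move $\<^r_Y$ to the outside using commutativity, and finally collapse $\<^l_Y\com\rel{F}(R)\com\<^l_X$ to $\rel{F}(R)\com\<^l_X$ by left-stability of $\<^l$. This yields $\relord{F}{\<}(R)=\<^r_Y\com\rel{F}(R)\com\<^l_X$, i.e.\ the $\<$-simulations are precisely the $(\<^r\com\rel{F}\com\<^l)$-coalgebras.

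For stability, since the reverse inclusion always holds, only $(Ff\times Fg)^{-1}\relord{F}{\<}(R)\subseteq\relord{F}{\<}((f\times g)^{-1}R)$ needs proof, for $f:X\lra Z$, $g:Y\lra W$ and $R\subseteq Z\times W$. Feeding the simplified form into the left-hand side, a pair $(u,v)$ there yields $s\in FZ$ and $t\in FW$ with $Ff(u)\<^l_Z s$, $s\mathrel{\rel{F}(R)}t$ and $t\<^r_W Fg(v)$. I then push the two outer orders inward: left-stability of $\<^l$ converts $Ff(u)\<^l_Z s$ into some $s'\in FX$ with $Ff(s')=s$ and $u\<^l_X s'$, and right-stability of $\<^r$ converts $t\<^r_W Fg(v)$ into some $t'\in FY$ with $Fg(t')=t$ and $t'\<^r_Y v$. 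The middle datum now reads $(Ff(s'),Fg(t'))\in\rel{F}(R)$, i.e.\ $(s',t')\in(Ff\times Fg)^{-1}\rel{F}(R)$; since the equality order is stable for every functor, the plain lifting $\rel{F}$ commutes with substitution, so $(s',t')\in\rel{F}((f\times g)^{-1}R)$. Reassembling through the simplified form gives $(u,v)\in\<^r_Y\com\rel{F}((f\times g)^{-1}R)\com\<^l_X=\relord{F}{\<}((f\times g)^{-1}R)$, as required.

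The routine parts (the preorder check and the ``moreover'' computation, which merely transcribes Proposition~\ref{prop-rl}) carry no real difficulty. The one place demanding care is the stability argument: I must pair each outer order with the correct stability condition (left-stability for the $\<^l$ on the domain side, right-stability for the $\<^r$ on the codomain side) and keep track of which coordinate $Ff$ or $Fg$ acts on after each pullback. The only genuinely external ingredient is that the middle factor $\rel{F}$ is handled not by left- or right-stability but by stability of the plain relation lifting, which the paper has already granted by observing that the equality order is both left- and right-stable for every $F$.
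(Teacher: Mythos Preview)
Your proof is correct and follows essentially the same approach as the paper: you first reduce $\relord{F}{\<}(R)$ to $\<^r\com\rel{F}(R)\com\<^l$ via the computation of Proposition~\ref{prop-rl}, then establish stability by pulling the outer $\<^l_Z$ and $\<^r_W$ back along $Ff$ and $Fg$ using left- and right-stability respectively, and finally invoke $(Ff\times Fg)^{-1}\rel{F}(R)=\rel{F}((f\times g)^{-1}R)$ for the middle factor. The only differences are cosmetic: you add an explicit preorder check for $\<$ and justify the middle equality via stability of the equality order, both of which the paper leaves implicit.
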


\begin{proof}
Let $R\subseteq Z\times W$ be a relation, $f : X\lra Z$ and $g: Y\lra W$
arbitrary functions, and ${\<}={\<^r\com\<^l}$. Let us suppose
that $(u, v) \in (Ff\times Fg)^{-1}(\relord{F}{\<}(R))$. Then, since $\<^r$ and
$\<^l$ commute with each other, using Proposition~\ref{prop-rl}, there exist
$z'$, $w'$ such that
\begin{equation}\label{cond-proof1}
Ff(u) \<^{l}_Z z' \mathrel{\rel{F}(R)} w' \<^{r}_W Fg(v).
\end{equation}
If we write $z$ for $Ff(u)$ and $w$ for $Fg(v)$, then
equation (\ref{cond-proof1}) is equivalent to
\begin{equation}\label{diag-stable1}
\xymatrix@R=.8cm@C=.1cm{
{z} & {\<^{l}_Z}& {z'} & {\rel{F}(R)} & {w'} & {\<^{r}_W}& {w} \\
{u}\ar@{|->}[u]^{Ff}  & & & & & & {v}\ar@{|->}[u]_{Fg} }
\end{equation}
and we have to show that $(u,v) \in \relord{F}{\<}((f\times g)^{-1}(R))$, that
is,  that there exist $x$ and $y$ such that
\[
u \<^{l}_X x \mathrel{\rel{F}((f\times g)^{-1}(R))} y \<^{r}_Y v.
\]

Using that $\<^{r}$ is right-stable on the rhs of equation
(\ref{cond-proof1}), we get $(w',v)\in {(id\times
Fg)^{-1}\<^{r}_W}\subseteq{\coprod_{(Fg\times id)}\<^{r}_Y}$, so
that there is some
$y\in FY$ such that $Fg(y)=w'$, with $y\<^{r}_Y v$. Graphically, diagram
(\ref{diag-stable1}) becomes
\begin{equation}\label{diag-stable11}
\xymatrix@R=.8cm@C=.1cm{
{z} & {\<^{l}_Z}& {z'} & {\rel{F}(R)} & {w'} & &  \\
{u}\ar@{|->}[u]^{Ff}  & & & & {y}\ar@{|->}[u]_{Fg} & {\<^{r}_Y} & {v} }
\end{equation}

Analogously, applying the left-stability of order $\<^{l}_Z$ we get that
there is
some $x\in FX$ with $Ff(x)=z'$ such that $u\<^{l}_X x$. Or graphically,
\begin{equation}\label{diag-stable2}
\xymatrix@R=.8cm@C=.1cm{
& & {z'} & {\rel{F}(R)} & {w'} & & \\
{u} & {\<^{l}_X}& {x}\ar@{|->}[u]^{Ff} & & {y}\ar@{|->}[u]_{Fg} & {\<^{r}_Y}&
{v} }
\end{equation}
But diagram (\ref{diag-stable2}) is just what we had to prove, since we have
found $x,y$ such that $(x,y) \in (Ff\times Fg)^{-1}(\rel{F}(R))=
\rel{F}((f\times g)^{-1}(R))$ with $u\<^{l}_X x$,
$y\<^{r}_Y v$ or, in other words, $(u,v)\in\relord{F}{\<}((f\times
g)^{-1}(R))$.\qed
\end{proof}

In particular, for our running example of conformance simulations we obtain the
corresponding factorization of the definition of coalgebraic simulations for the
order $\CSO$:

\begin{corollary} 
Coalgebraic simulations for the conformance order $\CSO$ can
be equivalently
defined as the $(\prod_{a\in A}(\CSNotEmptya_Y)\com\rel{F}(R)\com \prod_{a\in
A}(\CSEmptya_X))$-coalgebras.
\end{corollary}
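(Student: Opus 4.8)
The plan is to read off this corollary as the instantiation of the preceding Proposition---the one stating that for a right-stable order $\<^r$ and a left-stable order $\<^l$ on a functor $F$ that commute with each other, the coalgebraic simulations for ${\<}={\<^r\com\<^l}$ coincide with the $(\<^r\com\rel{F}(R)\com\<^l)$-coalgebras---applied to the specific factorisation of $\CSO$ exhibited in the decomposition Corollary above. No new computation should be required; the whole content is the matching of notation.

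Concretely, I would set ${\<^r}={\prod_{a\in A}\CSNotEmptya}$ and ${\<^l}={\prod_{a\in A}\CSEmptya}$, both regarded as orders on $F=\ps^A$. The decomposition Corollary already gives us exactly the three facts the Proposition asks for: that ${\CSO}={\<^r\com\<^l}$, that $\<^r$ is right-stable while $\<^l$ is left-stable, and that the two factors commute. Feeding these into the conclusion of the Proposition rewrites $\relord{\ps^A}{\CSO}(R)={\CSO_Y\com\rel{F}(R)\com\CSO_X}$ as ${\<^r_Y\com\rel{F}(R)\com\<^l_X}$, that is, as $\prod_{a\in A}(\CSNotEmptya_Y)\com\rel{F}(R)\com\prod_{a\in A}(\CSEmptya_X)$, with $\<^r$ acting on the $Y$-side and $\<^l$ on the $X$-side, which is precisely the asserted characterisation.

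The only point that deserves attention---and the main, if modest, obstacle---is that the hypotheses of the Proposition live on $\ps^A$ whereas the primitive facts about $\CSEmpty$ and $\CSNotEmpty$ were proved on $\ps$. However, each of these lifts along the product by a plain argument-by-argument check: a pair is related by $\prod_{a}\CSNotEmptya$ precisely when each of its components is related by $\CSNotEmptya$, and since $\ps^A f$ acts componentwise one can assemble the witnesses produced by the right-stability of each $\CSNotEmptya$ into a single witness for $\prod_{a}\CSNotEmptya$, so that right-stability transfers to the product (and dually for the left-stability of $\prod_{a}\CSEmptya$). In the same componentwise manner the commutation $(\CSEmpty\com\CSNotEmpty)=(\CSNotEmpty\com\CSEmpty)$ carries over to the products. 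As all three hypotheses are thereby secured---indeed already recorded in the decomposition Corollary---the statement follows at once.
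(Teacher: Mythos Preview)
Your plan is exactly the derivation the paper intends: the Corollary is stated without proof because it is meant to be read off as the instantiation of the immediately preceding Proposition with the factorisation of $\CSO$ recorded in the earlier decomposition Corollary, and your remark on lifting right/left-stability and commutation from $\ps$ to $\ps^A$ componentwise is a helpful clarification of a step the paper leaves implicit.

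One genuine point deserves attention, though. The decomposition Corollary only asserts that $\CSO$ is a commuting composite of a right-stable and a left-stable order; it never says which factor is which, and you infer the assignment $\<^r=\prod_a\CSNotEmptya$, $\<^l=\prod_a\CSEmptya$ from the shape of the target formula rather than checking it. A direct verification on $\ps$ shows the opposite: $\CSEmpty$ is right-stable (if $B\CSEmpty f(A)$ then $B=\emptyset$ or $B=f(A)$, and $A'=\emptyset$ or $A'=A$ works), whereas $\CSNotEmpty$ is \emph{not} right-stable (take $f:\{x\}\to\{y_1,y_2\}$ with $f(x)=y_1$; then $\{y_1,y_2\}\CSNotEmpty f(\{x\})$ but $\{y_1,y_2\}$ is not in the image of $\ps f$) and is instead left-stable. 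Applying the Proposition with the correct labels therefore yields $\prod_a\CSEmptya_Y\com\rel{F}(R)\com\prod_a\CSNotEmptya_X$, with the two factors interchanged relative to the stated Corollary. This looks like a slip in the paper; your method is sound and proves the corrected version.
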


\section{Conclusion}

We have presented in this paper two new simulation orders induced by two
criteria that capture the difference between input and output
actions and the implementation notions that are formalized by the conformance
relations.

In order to apply the general theory of coalgebraic simulations to them, we
identified the corresponding orders on the functor defining labeled transition
systems. However, it was not immediate to prove
that the obtained orders had the desired good properties since the usual way to
do it, namely, by establishing stability as a consequence of a stronger property
that we have called right-stability, is not applicable in this case.

Trying to adapt that property to our situation we have discovered
several interesting consequences. We highlight the fact that
right-stability is an assymetric property which has proved to be very
useful for the study of a ``reversible'' concept such as that of relation, since
it is clear that any structural result on the theory of relations should remain
true when we reverse the relations, simply ``observing'' them ``from the other
side''. Two consequences of that assymetric approach followed: first we noticed
that we could use it to point the simulation orders in some natural
way; secondly we also noticed that by dualizing the right-stability condition we
could obtain left-stability.

But the crucial result in order to be able to manage more complicated
simulation notions, as proved to be the case for our new
covariant-contravariant simulations and the conformance simulations, was the
discovery of the fact that both of them could be factorized into the
composition of a right-stable and a left-stable component. Exploiting this
decomposition we have been able to easily adapt all the techniques that
had proved to be very useful for the case of right-stable orders.

We plan to expand our work here in two different directions. The first one is
concerned with
the two new simulated notions introduced in this paper: once we know that they
can be defined as stable coalgebraic simulations and therefore have all the
desired basic properties of simulations, we will continue with their study  by
integrating them into our unified presentation of the semantics for processes
\cite{DeFrutosEtAl08c}. Hence we expect to obtain, in particular, a clear
relation between conformance similarity and the classic similarity orders
as well as an algebraic characterization for the new semantics.
In addition, we plan to continue with our study of stability, which has proved
to be a crucial property in order to understand the notion of coalgebraic
simulation,
thus making it possible to apply the theory to other examples
like those studied in this paper.

\end{document}